\newcommand{\pw}{\textrm{pw}}
\newcommand{\PBS}{{\sc PBS}\xspace}
\newcommand{\MCPP}{{\sc MCPP}\xspace}
\newcommand{\compMCPP}{{\sc comp-MCPP}\xspace}
\newcommand{\remove}[1]{}
\newcommand{\add}[1]{#1}
     \tikzstyle{city}=[circle,draw=black,fill=black, minimum size=3mm]
\title{Structural Parameterizations of the Mixed Chinese Postman Problem}
\author{Gregory Gutin, Mark Jones and Magnus Wahlstr{\"o}m}
\institute{
  Royal Holloway, University of London\\
 Egham, Surrey TW20 0EX, UK\\
} 
\begin{document}

\maketitle

\begin{abstract}
In the Mixed Chinese Postman Problem (MCPP), given a weighted mixed graph $G$ ($G$ may have both edges and arcs), our aim is to find a minimum weight closed walk traversing each edge and arc at least once. The MCPP parameterized by the number of edges in $G$ or the number of arcs in $G$ is fixed-parameter tractable as proved by van Bevern {\em et al.} (in press) and Gutin, Jones and Sheng (ESA 2014), respectively.   In this paper, we consider the unweighted version of MCPP.
Solving an open question of van Bevern {\em et al.} (in press), we show that somewhat unexpectedly MCPP parameterized by the (undirected) treewidth of $G$ is W[1]-hard. In fact, we prove that even the MCPP parameterized by the pathwidth of $G$ is W[1]-hard.
On the positive side, we show that the unweighted version of MCPP parameterized by tree-depth is fixed-parameter tractable. 
We are unaware of any natural graph parameters between pathwidth and tree-depth and so our results provide a dichotomy of the complexity of MCPP. 
Furthermore, we believe that MCPP is the first problem known to be W[1]-hard with respect to treewidth but FPT with respect to tree-depth.
\end{abstract}

\section{Introduction}

A {\em mixed graph} is a graph that may contain both edges and arcs (i.e., directed edges). A mixed graph $G$ is {\em strongly connected} if for each ordered pair $x,y$ of vertices in $G$ there is a path from $x$ to $y$ that traverses each arc in its direction. In this paper, we will deal with simple mixed graphs\footnote{A simple mixed graph has at most one arc or edge between any pair of vertices.} and (possibly non-simple) directed multigraphs (with multiple arcs between each pair of vertices). However, whenever we refer to the treewidth (pathwidth, tree-depth) of a graph, we mean the treewidth (pathwidth, tree-depth) of the underlying undirected graph. 

In this paper, we study the following well-known problem. 

 \begin{quote}
\fbox{~\begin{minipage}{0.9\textwidth}
  {\sc Mixed Chinese Postman Problem (MCPP)} \nopagebreak
  
    \emph{Instance:} A strongly connected mixed graph $G = (V, E \cup A)$, with vertex set $V$, set $E$ of edges and set $A$ of arcs; a weight function $w: E \cup A \rightarrow \mathbb{N}_0$.
      
    \emph{Output:} A closed walk of $G$ that traverses each edge and arc at least once, of minimum weight.
\end{minipage}~}
  \end{quote}

There is numerous literature on various algorithms and heuristics for MCPP; for informative surveys, see \cite{vanBevern,BruckerLNCS1981,EiseltOR1995,PengCJOR1989}.
When $A = \emptyset$, we call the problem the {\sc Undirected Chinese Postman Problem (UCPP)}, and when $E = \emptyset$, we call the problem the {\sc Directed Chinese Postman Problem (DCPP)}.
It is well-known that {\sc UCPP} is polynomial-time solvable \cite{EdmondsMP1973} and so is {\sc DCPP} \cite{BeltramiNetworks1974,ChristofidesOmega1973,EdmondsMP1973}, 
but {\sc MCPP} is NP-complete, even when $G$ is planar with each vertex having total degree $3$ and all edges and arcs having weight $1$ \cite{PapadimitriouJACM1976}.
  It is therefore reasonable to believe that {\sc MCPP} may become easier the closer it gets to {\sc UCPP} or {\sc DCPP} and indeed when parameterized by the number of edges in $G$ or the number of arcs in $G$, {\sc MCPP} is proved to be fixed-parameter tractable (FPT, defined below) by van Bevern {\em et al.} \cite{vanBevern} and Gutin, Jones and Sheng \cite{GutinESA2014}, respectively.

In this paper, we consider structural parameterizations of {\sc MCPP}. 
van Bevern {\em et al.} \cite{vanBevern} noted that Fernandes, Lee and  Wakabayashi \cite{FernandesDAM2009} proved that {\sc MCPP} parameterized by the treewidth of $G$ is in XP (when all edges and arcs have weight $1$), and asked whether this parameterization of {\sc MCPP} is FPT.
 It is well-known that many graph problems are FPT when parameterized  by the treewidth of the input graph (only a few such problems are 
W[1]-hard; see, e.g., \cite{DomIWPEC2008,FellowsCOCAO2007,GolovachDAM2012}). In this paper, we show that somewhat unexpectedly the {\sc MCPP} parameterized by treewidth belongs to a small minority of problems, i.e., it is W[1]-hard. In fact, we prove a stronger result by (i) replacing treewidth with pathwidth, and (ii) assuming that all edges and arcs have weight $1$. 
 
To complement this, we show a positive result for the parameter tree-depth. 
We prove that {\sc MCPP} parameterized by tree-depth is FPT. {\sc MCPP} is unusual in this regard, as this is the first problem we are aware of which is FPT  parameterized by tree-depth but W[1]-hard parameterized by treewidth. 
Following \cite{FernandesDAM2009}, we assume that all weights equal 1, however, we do not foresee any significant difficulty in generalizing our result to the weighted case.



Our paper is organised as follows. In the rest of this section, we provide some basics definitions on parameterized complexity as well as the definitions of treewidth, pathwidth and tree-depth. 
In Section \ref{sec2}, we introduce 
\add{an intermediate problem {\sc Properly Balanced Subgraph (\PBS)},}
and give a W[1]-hardness proof for a restricted variant of it. In Section \ref{sec3}, we reduce this variant of \PBS{} into {\sc MCPP} parameterized by pathwidth, showing that the latter is also W[1]-hard.
In Section \ref{sec4} we show that \PBS{} is FPT with respect to tree-depth, as outlined  above, and in Section \ref{sec5} we reduce {\sc MCPP} parameterized by tree-depth to \PBS{} parameterized by tree-depth, showing that this parameterization of {\sc MCPP} is FPT.
We conclude the paper with Section \ref{sec6}, where, in particular, we mention an open question from  \cite{vanBevern} on another parameterization of {\sc MCPP}.

     For reasons of space, many proofs and figures are deferred to the Appendix.

\paragraph{Parameterized Complexity.}
A \emph{parameterized problem} is a subset $L\subseteq \Sigma^* \times
\mathbb{N}$ over a finite alphabet $\Sigma$. $L$ is
\emph{fixed-parameter tractable (FPT)} if the membership of an instance
$(I,k)$ in $L$ can be decided in time
$f(k)|I|^{O(1)}$ (called {\em FPT time} and the corresponding algorithm an {\em FPT algorithm}), 
and in XP if it can be decided in time $|I|^{f(k)}$, 
where $f$ is a computable function of the
{\em parameter} $k$ only. 
Many parameterized problems are believed not to be FPT; one of them is {\sc $\kappa$-Clique}, {\sc Clique} parameterized by the number $\kappa$ of vertices of the required clique. 
A problem with parameter $k$ is called $W[1]$-{\em hard} if {\sc $\kappa$-Clique} can be reduced to it in FPT time with respect to $\kappa$ such that $k$ is bounded by a computable function of $\kappa$. 
 For more information on parameterized algorithms and complexity, see \cite{Downey2013}.

 \paragraph{Treewidth, Pathwidth and Tree-depth.} 
 For an undirected graph $G = (V,E)$, a \emph{tree decomposition} of $G$ is a pair $({\cal T}, \beta)$, where ${\cal T}$ is a tree and $\beta:V({\cal T}) \rightarrow 2^V$ such that $\bigcup_{x \in V({\cal T})}\beta(x) = V$, for each edge $uv \in E$ there exists $x \in V({\cal T})$ with $u,v \in \beta(x)$, and for each $v \in V$ the set of nodes $\beta^{-1}(v)$ forms a connected subtree in ${\cal T}$. 
 The \emph{width} of $({\cal T}, \beta)$ is $\max_{x \in V({\cal T})}(|\beta(x)|-1)$.
 The \emph{treewidth} of $G$ is the minimum width of all tree decompositions of $G$.
  The \emph{pathwidth} of a graph is the minimum width of all tree decompositions $({\cal T}, \beta)$ for which ${\cal T}$ is a path.
  For a directed multigraph $H$, we will use $\pw(H)$ to denote the pathwidth of the underlying undirected graph of $H$.



The \emph{tree-depth} of a connected graph $G$ is defined as follows. 
Let $T$ be a rooted tree with vertex set $V(G)$, such that if $xy$ is an edge in $G$ then $x$ is either an ancestor or a descendant of $y$ in $T$. Then we say that \emph{$G$ is embedded in $T$}.
The \emph{depth} of $T$ is the number of vertices in a longest path in $T$ from the root to a leaf.
The \emph{tree-depth} of $G$ is the minimum $t$ such that $G$ is embedded in a tree of depth $t$.
Thus, for example, a star $K_{1,r}$ has tree-depth $2$. A path of length $n$ has tree-depth $O(\log n)$.
A graph of tree-depth $k$ has pathwidth at most $k$.



  \section{Properly Balanced Subgraph Problem}\label{sec2}

  In this section, we introduce the problem {\sc Properly Balanced Subgraph (\PBS)}, and show that it is W[1]-hard parameterized by pathwidth.
  In Section \ref{sec4}, we will show that a special case of the problem with restricted weights is fixed-parameter tractable with respect to tree-depth.
 
 A directed multigraph is called {\em balanced} if the in-degree of each vertex coincides with its out-degree. 
  A \emph{double arc} is a specified pair of arcs $(a,a')$ such that $a$ and $a'$ have the same heads and tails.
   We will say that a subgraph $D'$ of $D$ \emph{respects double arcs} if  $|A(D') \cap \{a,a'\}| \neq 1$ for every double arc $(a,a')$.
   That is, if $D'$ uses one arc of a double arc then it must use both of them.
   A \emph{forbidden pair} is a specified pair of arcs $(b,b')$ such that $b$ is the reverse of $b'$.
 We say that $D'$ \emph{respects forbidden pairs} if  $|A(D') \cap \{b,b'\}| \neq 2$ for every forbidden pair $(b,b')$.
 That is, if $D'$ uses one arc of a forbidden pair then it cannot use the other.
   We will say that a subgraph $D'$ of $D$ is \emph{properly balanced} if $D'$ is balanced 
 and respects double arcs and forbidden pairs.
PBS is then defined as follows.

   \begin{quote}
     \fbox{~\begin{minipage}{0.9\textwidth} {\sc Properly Balanced
           Subgraph (\PBS)} \nopagebreak

         \emph{Instance:} A directed multigraph $D = (V, A)$; a weight
         function $w: A \rightarrow \mathbb{Z}$; a set $X = \{(a_1,a_1'),
         \dots, (a_r, a_r')\}$ of \emph{double arcs} with $a_i,a_i'
         \in A$ for each $i \in [r]$;
         a set $Y = \{(b_1,b_1'), \dots, (b_s, b_s')\}$ of \emph{forbidden
           pairs} with $b_i,b_i' \in A$ for each $i \in [s]$.
         Double arcs are disjoint from forbidden pairs. 
      
         \emph{Output:}
         A properly balanced subgraph $D'$ of $D$ of negative weight, if
         one exists.
       \end{minipage}~}
   \end{quote}
  
 
 \subsection{Gadgets for \PBS}
 
 We now describe some simple gadget graphs
(for now we do not assign weights; we will do this later).
 Each gadget will have some number of \emph{input} and \emph{output} arcs. Later, we will combine these gadgets by joining the input and output arcs of different gadgets together using double arcs. Henceforth, for each positive integer $n$, $[n]=\{1,2,\ldots ,n\}$.

%
%
%
%
%
%

 A \emph{Duplication gadget} has one input arc and $t$ output arcs, for some positive integer $t$. The vertex set consists of vertices $x,y$, and $u_i, v_i$ for each $i \in [t]$. The arcs form a cycle $xyu_1v_1\dots u_tv_tx$. The input arc is the arc $xy$, and the output arcs are the arcs $u_iv_i$ for each $i \in [t]$. 

 A \emph{Choice gadget} has one input arc $xy$ and $t$ output arcs $u_iv_i : i \in [t]$, for some positive integer $t$.
 The vertex set consists of the vertices $x,y,z,w$ and $u_i,v_i$ for each $i \in [t]$. The arcs consist of a path $wxyz$, and the path $zu_iv_iw$ for each $i \in [t]$. 
 \remove{(See Figure \ref{fig:choiceGadget}.)}

 Finally, a \emph{Checksum gadget} has $t_l$ \emph{left input} arcs $x_iy_i: i \in [t_l]$ for some positive integer $t_l$, and $t_r$ \emph{right input} arcs $u_iv_i: i \in [t_r]$, and no output arcs. The vertex set consists of the vertices $w,z$ together with $x_i,y_i$ for each $i \in [t_l]$ and $u_i,v_i$ for each $i \in [t_r]$.
 The arc set consists of the path $wx_iy_iz$ for each $i \in [t_l]$, and $zu_iv_iw$ for each $i \in [t_r]$.
 \remove{(See Figure \ref{fig:checksumGadget}.)}
 
%
%

\add{
Proposition \ref{prop:gadgets} below is easy to prove and thus its proof is omitted.

\begin{proposition}\label{prop:gadgets}
 Let $D$ be one of the gadgets described above, and let $D'$ be a properly balanced subgraph of $D$.
  Then if $D'$ does not contain any input arcs of $D$, it does not contain any other arcs of $D$. Otherwise:  if $D$ is a Duplication gadget, then $D'$ contains every output arc of $D$; if $D$ is a Choice gadget, then $D'$ contains exactly one output arc of $D$;
  and if $D$ is a Checksum gadget, then $D'$ contains the same number of left input arcs as right input arcs of $D$.
Furthermore, for any choice of input and output arcs satisfying these conditions, there is subgraph $D'$ containing exactly those input and output arcs.
\end{proposition}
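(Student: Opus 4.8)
The plan is to reduce everything to a statement about balanced arc‑subsets. Since a single gadget contains neither double arcs nor forbidden pairs, for a subgraph $D'$ of a gadget $D$ the property of being properly balanced is exactly the property of being \emph{balanced}, i.e.\ every vertex has equal in‑degree and out‑degree in $D'$. So the whole proposition is purely combinatorial: it asks which arc‑subsets of each of the three gadgets are balanced, and which input/output patterns are realized by such subsets.

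The key local observation I would use is that if a vertex $v$ of $D$ has exactly one in‑arc $a$ and exactly one out‑arc $a'$, then in every balanced $D'$ we have $a\in A(D')$ if and only if $a'\in A(D')$. In the Duplication gadget \emph{every} vertex is of this form; in the Choice and Checksum gadgets every vertex other than the two ``hub'' vertices $w$ and $z$ is. Applying this repeatedly forces the arcs along each of the directed paths making up the gadget to be simultaneously present or simultaneously absent in $D'$. For the Duplication gadget this is already decisive, since the gadget is a single directed cycle: $D'$ is either empty or all of $D$, so either $D'$ contains no input arc and no other arc, or it contains the input arc together with all output arcs.

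For the Choice and Checksum gadgets it then remains only to impose balance at the hubs $z$ and $w$. Let $L$ be the number of arcs of $D'$ on the ``left'' side (the single arc $xy$ for Choice, the arcs $x_iy_i$ for Checksum) and $R$ the number on the ``right'' side (the output arcs $u_iv_i$ for Choice, the right input arcs $u_iv_i$ for Checksum); by the path observation these counts coincide with the numbers of left‑ and right‑path arcs incident to $z$ and to $w$ that lie in $D'$. Balance at $z$ and at $w$ then forces $L=R$, and for the Choice gadget we additionally have $L\le 1$ because there is only one input arc. The subcase $L=R=0$ gives $D'=\emptyset$, hence the clause ``no input arc in $D'$ $\Rightarrow$ no other arc in $D'$''; otherwise $D'$ contains exactly one output arc (Choice) or equally many left and right input arcs (Checksum), which is what is claimed.

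For the final (``Furthermore'') sentence, for each admissible configuration I would exhibit the obvious witness: the empty subgraph when no input arc is selected, and otherwise the union of the directed paths of $D$ corresponding to the selected input and output arcs (for the Choice and Duplication gadgets this automatically includes the unique input path). A direct degree check — trivial at the path‑internal vertices, and at the hubs amounting to comparing the numbers of selected left‑ and right‑paths — shows each such $D'$ is balanced, hence properly balanced. I do not expect any genuine obstacle in this argument; the only spot meriting care is orienting the balance equations at $z$ and $w$ correctly, so that the two sides count the selected left‑ and right‑paths respectively, which is precisely what delivers the ``$L=R$'' and ``exactly one output'' conclusions.
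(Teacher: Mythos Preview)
Your argument is correct and is exactly the natural one: since the gadgets carry no double arcs or forbidden pairs, ``properly balanced'' reduces to ``balanced''; the degree-one propagation along paths together with the balance condition at the hub vertices $z$ and $w$ yields the claimed constraints, and the explicit path-unions witness the ``Furthermore'' clause. The paper itself omits the proof entirely, calling it easy, so there is nothing further to compare.
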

}

 Observe that in all of our gadgets, the vertices in input or output arcs all have in-degree and out-degree $1$.
 We next describe how to combine these gadgets. For two unjoined arcs $uv$ and $xy$ (possibly in disjoint graphs), the operation of \emph{joining} $uv$ and $xy$ is as follows: Identify $u$ and $x$, and identify $v$ and $y$. Keep both $uv$ and $xy$, and add $(uv,xy)$ as a double arc.
 

 \begin{lemma}\label{lem:joinCorrectness}
 Let $D_1$ and $D_2$ be disjoint directed multigraphs. Let $u_1v_1, \dots u_tv_t$ be arcs in $D_1$, and let $x_1y_1, \dots x_ty_t$ be arcs in $D_2$, such that $u_i$ and $v_i$ both have in-degree and out-degree $1$ in $D_1$, and $x_i$ and $y_i$ both have in-degree and out-degree $1$ in $D_2$, for each $i \in [t]$.
 Let $D$ be the graph formed by joining $u_iv_i$ and $x_iy_i$, for each $i \in [t]$. 
  Then a subgraph $D'$ of $D$ is a properly balanced graph if and only if
  \add{(1) $|A(D') \cap \{u_iv_i, x_iy_i\}| \neq 1$ for each $i \in [t]$; and (2) $D'$ restricted to $D_1$ is a properly balanced subgraph of $D_1$, and $D'$ restricted to $D_2$ is a properly balanced subgraph of $D_2$.}
%
%
 \end{lemma}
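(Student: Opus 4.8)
The plan is to verify separately the conditions defining ``properly balanced'' (balance at every vertex, respecting double arcs, respecting forbidden pairs), using the fact that $D_1$ and $D_2$ meet, inside $D$, only at the $2t$ identified vertices, each of which has a rigid local structure. Throughout, write $D_1' := D'\cap D_1$ and $D_2':=D'\cap D_2$ for the restrictions of $D'$, i.e.\ the subgraphs of $D_1$ and $D_2$ with arc sets $A(D')\cap A(D_1)$ and $A(D')\cap A(D_2)$ (recall $A(D)=A(D_1)\sqcup A(D_2)$). The degree hypotheses force the $u_i$ to be pairwise distinct and the $v_i$ to be pairwise distinct; moreover in $D_1$ the vertex $u_i$ has $u_iv_i$ as its unique out-arc and exactly one in-arc, and $v_i$ has $u_iv_i$ as its unique in-arc and exactly one out-arc, with the symmetric statement for $x_i,y_i$ in $D_2$. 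Hence after joining, the vertex $u_i=x_i$ of $D$ has exactly one in-arc and one out-arc lying in $D_1$ and exactly one in-arc and one out-arc lying in $D_2$, the two out-arcs being precisely $u_iv_i$ and $x_iy_i$; similarly for $v_i=y_i$. Every other vertex of $D$ lies in exactly one $D_j$, with all its $D$-incident arcs inside that $D_j$. Finally, the double arcs of $D$ are exactly those of $D_1$, those of $D_2$, and the $t$ pairs $(u_iv_i,x_iy_i)$, while the forbidden pairs of $D$ are exactly those of $D_1$ together with those of $D_2$.

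For the forward direction, assume $D'$ is properly balanced in $D$. Since each $(u_iv_i,x_iy_i)$ is a double arc of $D$, respecting double arcs immediately gives (1). Because the double arcs and forbidden pairs of $D_1$ (resp.\ $D_2$) are also double arcs and forbidden pairs of $D$ with both components in $D_1$ (resp.\ $D_2$), both $D_1'$ and $D_2'$ respect double arcs and forbidden pairs. For balance: a vertex of $D$ other than the $u_i,v_i$ has the same incident arcs in $D$ as in the unique $D_j$ containing it, so its balance in $D'$ is exactly balance in $D_j'$. At the vertex $u_i=x_i$, its out-degree in $D'$ is $|A(D')\cap\{u_iv_i,x_iy_i\}|$, which by (1) equals $0$ or $2$; since its in-degree in $D'$ (a number between $0$ and $2$) must equal this, we conclude that $D'$ contains the $D_1$-in-arc of $u_i$ if and only if it contains $u_iv_i$, and the $D_2$-in-arc of $x_i$ if and only if it contains $x_iy_i$ --- that is, $D_1'$ is balanced at $u_i$ and $D_2'$ at $x_i$. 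The vertices $v_i=y_i$ are symmetric, so (2) holds.

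For the converse, assume (1) and (2) hold. The double arcs of $D$ are respected: those coming from $D_1$ or $D_2$ by (2), the $t$ new ones by (1); the forbidden pairs of $D$ are respected by (2). For balance, a vertex outside the identified ones has all its $D$-incident arcs in a single $D_j$, where $D_j'$ is balanced. At $u_i=x_i$: balance of $D_1'$ at $u_i$ says $D'$ uses the $D_1$-in-arc of $u_i$ iff it uses $u_iv_i$, and balance of $D_2'$ at $x_i$ says $D'$ uses the $D_2$-in-arc of $x_i$ iff it uses $x_iy_i$; summing these two equalities of $0/1$ quantities gives in-degree $=$ out-degree of $u_i=x_i$ in $D'$. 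The vertices $v_i=y_i$ are symmetric, so $D'$ is balanced, hence properly balanced in $D$.

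The only genuinely delicate point is the balance argument at an identified vertex in the forward direction: a single balance equation of $D$ must be shown to split into the two separate balance equations of $D_1$ and $D_2$, and this is exactly where the double-arc condition (1) is needed (it is easy to see that, without (1), balance in $D$ does not imply balance in the two parts). This splitting works precisely because each identified vertex meets each $D_j$ in exactly one in-arc and one out-arc; the rest is routine bookkeeping about which double arcs and forbidden pairs occur in $D$.
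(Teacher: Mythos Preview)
Your proof is correct and follows essentially the same approach as the paper's: in both directions you separate the vertices into the identified ones and the rest, handle respecting double arcs and forbidden pairs by the obvious decomposition, and use the rigid degree-$2$ structure at each identified vertex together with condition~(1) to split the single balance equation in $D$ into the two balance equations in $D_1$ and $D_2$. Your write-up is more explicit than the paper's (which, for the converse, simply notes that $A(D_1),A(D_2)$ partition $A(D)$ so balance is additive), but the logical skeleton is the same.
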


%
%

 \subsection{W[1]-hardness of \PBS}

 
 \add{By joining an output arc of one gadget to the input arc of another gadget, we have that a solution will only pass through the second gadget if it uses the corresponding arc of the first gadget. Thus for example, if a Duplication gadget has $k$ output arcs, each of which is joined to the input arc of a Choice gadget, then any solution that uses the input arc of the Duplication gadget has to use exactly one output arc from each of the Choice gadgets. By combining gadgets in this way, we can create ``circuits" that represent instances of other problems.

We will use this idea to represent the following W[1]-hard problem. 
In {\sc $k$-Multicolored Clique}, we are given a graph  $G = (V_1 \cup V_2 \dots \cup V_k, E)$, such that for each $i \in [k]$, $V_i$ forms an independent set, and asked to decide whether $G$ contains a clique with $k$ vertices, where $k$ is the parameter.
}
 
%
%
%
%
%
  
  \begin{theorem}\cite{FellowsTCS2009}
   {\sc $k$-Multicolored Clique} is W[1]-hard.
  \end{theorem}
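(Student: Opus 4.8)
The statement is classical and the excerpt already cites it, so one option is simply to invoke \cite{FellowsTCS2009}. If a self-contained proof were wanted, here is the route I would take. The plan is to give a polynomial-time parameterized reduction from {\sc $\kappa$-Clique} --- which is W[1]-hard by the very definition of W[1]-hardness recalled in the excerpt --- to {\sc $k$-Multicolored Clique}, with the parameter left unchanged, $k = \kappa$.

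Given an instance $(G,\kappa)$ of {\sc $\kappa$-Clique}, where $G=(V,E)$ is a simple (in particular loopless) undirected graph, I construct a graph $G'=(V_1\cup\dots\cup V_k,E')$ as follows. Put $k=\kappa$, and for each $i\in[k]$ let $V_i=\{v^{(i)}:v\in V\}$ be a fresh disjoint copy of $V$; these will be the $k$ color classes. No edge is placed inside any $V_i$, so each $V_i$ is independent, exactly as the problem requires. For every edge $uv\in E$ and every ordered pair $(i,j)$ with $i,j\in[k]$ and $i\neq j$, add to $E'$ the edge between $u^{(i)}$ and $v^{(j)}$. This construction runs in polynomial time and preserves the parameter, so it is a valid parameterized reduction; it then remains to verify that $G$ has a clique of size $\kappa$ if and only if $G'$ has a multicolored clique of size $k$.

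For the forward direction, if $\{v_1,\dots,v_k\}$ is a clique in $G$ then $\{v_1^{(1)},\dots,v_k^{(k)}\}$ uses exactly one vertex of each color class, and for $i\neq j$ the edge $v_iv_j\in E$ yields the edge $v_i^{(i)}v_j^{(j)}\in E'$, so this set is a multicolored clique. For the converse, suppose $\{w_1,\dots,w_k\}$ is a multicolored clique of $G'$ with $w_i=v_i^{(i)}\in V_i$. For each $i\neq j$ the edge $w_iw_j\in E'$ was, by construction, created from some edge $v_iv_j\in E$, and since $G$ is loopless this forces $v_i\neq v_j$; hence $\{v_1,\dots,v_k\}$ is a set of $k$ distinct, pairwise adjacent vertices of $G$, i.e.\ a clique of size $\kappa$. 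The only step requiring any care --- and thus the hard part, such as it is --- is precisely this last deduction that the chosen $v_i$ are genuinely distinct, which is exactly where looplessness of $G$ is used; everything else is routine bookkeeping.
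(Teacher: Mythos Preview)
Your proposal is correct. The paper itself does not give any proof of this theorem; it simply cites \cite{FellowsTCS2009} and moves on, treating the W[1]-hardness of {\sc $k$-Multicolored Clique} as a black box. Your reduction from {\sc $\kappa$-Clique} via $k$ disjoint copies of the vertex set is the standard folklore argument and is entirely sound, including the point about looplessness forcing distinctness in the backward direction. So you have supplied strictly more than the paper does here; there is nothing to compare against, only to note that the paper relies on the citation whereas you have spelled the argument out.
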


  

%
 
 \begin{theorem}\label{thm:mbsdaHardness}
  \PBS{} is W[1]-hard parameterized by pathwidth, 
  even when there are no forbidden arcs, there is a single arc $a^*$ of weight $-1$ and $a^*$ is not part of a double arc, and all other arcs have weight $0$.
 \end{theorem}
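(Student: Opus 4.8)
The plan is to reduce from {\sc $k$-Multicolored Clique}, realising an instance of it by a ``circuit'' built from the gadgets above. Given $G=(V_1\cup\dots\cup V_k,E)$, we build a \PBS{} instance whose only arc of nonzero weight is $a^*$, with $w(a^*)=-1$; then a properly balanced subgraph has negative weight exactly when it contains $a^*$, and the construction is designed so that a properly balanced subgraph can contain $a^*$ if and only if $G$ has a multicolored clique of size $k$.

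Fix bijections $\iota_i\colon V_i\to[|V_i|]$. The circuit consists of: a top Duplication gadget $Dup_0$ whose input arc is $a^*$ and with $k+\binom k2$ output arcs; for each $i\in[k]$ a Choice gadget $C_i$ and for each pair $i<j$ a Choice gadget $C_{ij}$, each with input joined to a distinct output arc of $Dup_0$. Gadget $C_i$ has one output arc per $v\in V_i$, joined to the input of a Duplication gadget $Dup_{i,v}$ whose outputs split into $k-1$ groups, the group for each $j\neq i$ having exactly $\iota_i(v)$ arcs. Gadget $C_{ij}$ has one output arc per edge $e=uv\in E$ with $u\in V_i,v\in V_j$, joined to the input of a Duplication gadget $Dup_{ij,e}$ with one group of $\iota_i(u)$ outputs and one group of $\iota_j(v)$ outputs. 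Finally, for each ordered pair $(i,j)$ we add a Checksum gadget $Chk_{ij}$ whose left input arcs are the joined $j$-group outputs of all the gadgets $Dup_{i,v}$ ($v\in V_i$) and whose right input arcs are the joined ``$i$-side'' group outputs of all the gadgets $Dup_{ij,e}$. Every arc except $a^*$ gets weight $0$, no forbidden pair is used, and every double arc arises from a join, so $a^*$ lies in no double arc.

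Correctness follows from Lemma~\ref{lem:joinCorrectness} (applied iteratively) and Proposition~\ref{prop:gadgets}. A properly balanced subgraph $D'$ containing $a^*$ restricts inside $Dup_0$ to a balanced subgraph through the input arc, hence to all of $Dup_0$ (a directed cycle has only the empty and the full balanced subgraph); propagating along joins, $D'$ uses the input arc of every $C_i$ and $C_{ij}$, so $C_i$ selects a single $v_i\in V_i$ and $C_{ij}$ a single $e_{ij}\in E$, which forces exactly the gadgets $Dup_{i,v_i}$ and $Dup_{ij,e_{ij}}$ to be used, and in full. Then $Chk_{ij}$ admits a properly balanced restriction iff its numbers of used left and right input arcs agree, i.e.\ iff $\iota_i(v_i)$ equals the $\iota_i$-index of the $V_i$-endpoint of $e_{ij}$, i.e.\ iff $e_{ij}$ is incident to $v_i$; since this holds for all $i,j$, the set $\{v_1,\dots,v_k\}$ is a clique. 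Conversely, from a multicolored $k$-clique one assembles a properly balanced subgraph containing $a^*$ gadget by gadget, using the ``Furthermore'' clauses of Lemma~\ref{lem:joinCorrectness} and Proposition~\ref{prop:gadgets} to pick, inside each $Chk_{ij}$, exactly the left and right input arcs that come from the selected gadgets. Hence the \PBS{} instance is a yes-instance iff $G$ has a multicolored $k$-clique, and the reduction is clearly polynomial-time.

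Finally we must bound the pathwidth by $O(k^2)$, a function of $k$ alone; I expect this to be the main obstacle, since there are $\Theta(|V(G)|^2)$ gadgets and the cycles $Dup_{i,v}$ are long. I would take a ``permanent'' bag $B_0$ holding all of $V(Dup_0)$ together with the two hub vertices of every Choice and every Checksum gadget, so $|B_0|=O(k^2)$, and then build a path decomposition by concatenating, one small Duplication gadget $Dup_{i,v}$ or $Dup_{ij,e}$ at a time, a width-$2$ path decomposition of that gadget's cycle with $B_0$ added to every bag. This works because every arc of every Choice or Checksum gadget either has both endpoints in $B_0$ or is an input/output arc of some small Duplication gadget (hence both its endpoints lie on that cycle), while distinct small Duplication gadgets meet one another only through $B_0$; checking that every arc (doubled arcs included) is covered and every vertex occupies a contiguous block of bags then gives pathwidth $O(k^2)$. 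This completes the FPT reduction from the W[1]-hard {\sc $k$-Multicolored Clique}, proving the theorem.
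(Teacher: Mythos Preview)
Your reduction is correct and shares the overall scheme with the paper's (encode {\sc $k$-Multicolored Clique} by a circuit of Duplication/Choice/Checksum gadgets, with the Checksum hub vertices forming an $O(k^2)$-size ``permanent'' set that bounds the pathwidth), but the consistency check is organised differently. The paper selects, for each \emph{ordered} pair $(i,j)$, an edge $e_{i\to j}\in E_{\{i,j\}}$ incident to the chosen $v_i$ (via a further layer of Choice gadgets below {\sc AssignVertex}), assigns every edge a \emph{global} index $r$, and uses one Checksum gadget per \emph{unordered} pair to enforce $e_{i\to j}=e_{j\to i}$. You instead select one edge per unordered pair directly at the top level and use two Checksums (one per ordered pair) to force that edge's endpoints to coincide with the chosen vertices via \emph{per-class} vertex indices $\iota_i$. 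Your encoding is flatter and avoids the extra gadget layer; in exchange, the paper's top Duplication gadget has only $k$ outputs, so the entire circuit minus the Checksum hubs already has \emph{constant} pathwidth (they compute $\le 16$ via repeated application of Lemma~\ref{lem:joinPathwidth}), and only the $k^2-k$ hub vertices of the {\sc CheckEdge} gadgets push it up to $O(k^2)$. Your $Dup_0$ already has $\Theta(k^2)$ vertices, which is why you must put all of it into $B_0$; this is harmless but worth noting as the reason your permanent bag is larger.

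One minor point to tidy in your pathwidth justification: arcs such as $zu_s$ and $v_sw$ inside a Choice gadget, and $wx_\ell$, $y_\ell z$ inside a Checksum gadget, have one endpoint in $B_0$ and the other on a small Duplication cycle, so they fall under neither of the two cases you state. They are of course still covered (since $B_0$ lies in every bag and the other endpoint appears in the bags for its cycle), but the case analysis should be a trichotomy rather than a dichotomy.
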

 
 \add{We give a sketch of the proof (for a full proof, see the Appendix):
 
Let $G = (V_1 \cup \dots \cup V_k, E)$ be an instance of {\sc $k$-Multicolored Clique}.
We construct an equivalent instance of \PBS{} as follows.

Initially we have a duplication gadget with $k$ output arcs, whose input arc is the only arc of weight $-1$. All other arcs will have weight $0$. Thus, any solution to the PBS instance will have to use this duplication gadget and all its output arcs. 
Then for each $i \in [k]$, we choose a vertex $v_i \in V_i$ (represented by a Choice gadget with $|V_i|$ output arcs, whose input arc is joined to the initial Duplication gadget).
For each choice of $v_i$, and for each $j \in [k] \setminus \{i\}$ (enforced by a Duplication gadget with $k-1$ output arcs), we then choose an edge $e_{i \rightarrow j}$ that is adjacent to $v_i$ and a vertex in $V_j$ (represented by a choice gadget with $|N(v_i)\cap V_j|$ output arcs).

The graph so far looks like a ``tree'' of gadgets, and as such has bounded treewidth. Is easy to show that it also has bounded pathwidth. It enforces that we choose a set of vertices $v_1, \dots , v_k$, and then an edge $e_{i \rightarrow j}$ for each ordered pair $(i,j), i \neq j$.
(Each possible choice for $e_{i \rightarrow j}$ is represented by one output arc on the last layer of Choice gadgets). 
Now observe that $v_1, \dots , v_k$ forms a clique if and only if there are choices of $e_{i \rightarrow j}$ such that $e_{i \rightarrow j} = e_{j \rightarrow i}$ for each $(i,j)$.

We can check for this condition as follows. Firstly, we associate each edge $e$ with a unique number $n_e$. Then, for each output arc corresponding to the edge $e$, we join that arc to a Duplication gadget with $n_e$ output arcs. (This increases the pathwidth of the graph by a constant). Then for each unordered pair 
$\{i,j\}, i < j$, we create a Checksum gadget $\textsc{CheckEdge}(i,j)$. The left input arcs of this gadget are joined to all the output arcs of all Duplication gadgets corresponding to a choice for $e_{i \rightarrow j}$, and the right input arcs are joined to all the output arcs of all Duplication gadgets corresponding to a choice for $e_{j \rightarrow i}$. This completes the construction of the graph.

It follows that for any solution to the PBS instance, the number of left input arcs of $\textsc{CheckEdge}(i,j)$ in the solution is equal to the number associated with the edge chosen for  $e_{i \rightarrow j}$ . 
Similarly the number of right input arcs in the solution is equal to the number associated with the edge chosen for $e_{j \rightarrow}$.
As these numbers have to be equal, it follows that there is a solution if and only if the choice for $e_{i \rightarrow j}$ is the same as the choice for $e_{j \rightarrow i}$ for each $i \neq j$.
Thus, our PBS instance has a solution of negative weight if and only if $G$ has a clique. It remains to check the pathwdith of the graph.

Before the addition of the Checksum gadgets, the graph has pathwdith bounded by a constant. As the input arcs of these gadgets are joined to other arcs, adding the Checsum gadgets only requires adding $2$ vertices for each $(i,j)$. Thus, the pathwdith of the constructed graph is $O(k^2)$.

 }

  \section{Reducing \PBS{} to \MCPP}\label{sec3}

  We now show how to reduce an instance of \PBS{}, of the structure given in Theorem~\ref{thm:mbsdaHardness}, to \MCPP.
  Let ($D = (V,A), w, X=\{(a_i,a_i'): i \in [t]\}, Y=\emptyset$) be an instance of \PBS{} with double arcs $X$ and no forbidden pairs, and where $w(a^*)=-1$ for a single arc $a^*$ and $w(a)=0$ for every other arc. We may assume that $a^*$ is not in a double arc. 
  We will produce an instance $G$ of \MCPP{} and an integer $W$, such that $G$ has a solution of weight $W$, and $G$ has a solution of weight less than $W$ if and only if our instance of \PBS{} has a solution with negative weight.
  All edges and arcs in our \MCPP{} instance will have weight $1$.
  
  \add{
  We derive $G$ by replacing every subgraph of $D$ between a pair of vertices (either a double arc, a single arc of weight $0$, or a single arc of weight $-1$) by an appropriate gadget.
The gadgets will be such that within each gadget, there are only two \MCPP{} solutions of reasonable weight: a solution that is balanced within the gadget (corresponding to not using the original arc/double arc in a solution to $D$); and a solution that is imbalanced at the vertices by the same amount that the original arc / double arc is (which corresponds to using the original arc / double arc in a solution to $D$).
Thus, every properly balanced subgraph of $D$ corresponds to a balanced subgraph of $G$, and vice versa.

For each gadget, except the gadget corresponding to the negative weight arc, the weights of the two solutions will be the same.
In the case of the negative weight arc, the solution that correponds to using te arc will be cheaper by $1$. 
Thus, there are two possible weights for a solution to $G$, and the cheaper weight is only possible if $D$ has a properly balanced subgraph of negative weight.

  }
  
  In what follows, we will construct arcs and edges of two weights, \emph{standard} and \emph{heavy}. Standard arcs and edges have weight 1; heavy arcs and edges have weight $M$, where $M$ is 
  \remove{a large number to be determined later, chosen so that no solution of weight at most $W$ can traverse a heavy arc or edge more than once.}
  \add{a large enough number that we may assume that no solution traverses a heavy arc or edge more than once.}
  This will be useful to impose structure on the possible solutions when constructing gadgets. 
  \remove{At the end of the construction, we will show how to replace the heavy arcs and edges by paths of standard arcs and edges, so that the end result is an unweighted graph. }
  \add{A heavy arc (edge) is equivalent to a directed (undirected) path of length $M$, and so we also show W[1]-hardness for the unweighted case.}

 \add{The gadgets are constructed as follows. It is straightforward to verify that each gadget has only two solutions that traverse each arc or edge exactly once, and that the imbalances and weights of these solutions are as described above.
(For a full proof, see the Appendix.)}
 
 {\bf For an arc $uv$ of weight $0$ that is not part of a double arc:} Construct {\sc Gadget($u,v$)} by creating
%
%
%
a new vertex $z_{uv}$, with standard arcs $z_{uv}u$ and $z_{uv}v$, two heavy arcs $uz_{uv}$, and a heavy arc $vz_{uv}$. 

 {\bf For an arc $uv$ of weight $-1$ that is not part of a double arc:}
%
  Construct {\sc Gadget($u,v$)} by adding
  two new vertices $w_{uv}$ and $z_{uv}$, with standard arcs $z_{uv}u, z_{uv}w_{uv}$ and $vw_{uv}$, two heavy arcs $uz_{uv}$, one heavy arc $w_{uv}z_{uv}$, and two heavy arcs $w_{uv}v$. 
 
 {\bf For a double arc from $u$ to $v$:}
 {\sc Gadget($u,v$)} consists of a heavy arc $uv$ and a heavy edge $\{u,v\}$. 
 Assuming a solution traverses each heavy arc/edge exactly once, the only thing to decide is in which direction to traverse the undirected edge.

\remove{

{\bf Removing heavy arcs and edges.} Finally, we replace every heavy arc (edge, respectively) we just created, of weight $M$, by a directed (undirected) path of length $M$, where all internal vertices have degree two. 
Note that in any minimal solution, each arc or edge in such a path will be traversed the same number of times
(if one edge in the path is traversed more times than its neighbor, then it must be traversed at least twice more, including at least once in each direction, and so the solution is not minimal).
Thus in the analysis, we may treat such a path as effectively being a single edge or arc of weight $M$.
}

\add{We note that each of our gadgets has pathwidth bounded by a constant. It can be shown that replacing the arcs of $D$ with gadgets in this way will only increase the pathwdith by a constant.}

We now have that, given an instance $(D,w,X, Y)$  of \PBS{} of the type specified in Theorem \ref{thm:mbsdaHardness}, we can in polynomial time create an equivalent instance $G$ of \MCPP{} with pathwidth bounded by 
\add{$O(\pw(D))$.} 
We therefore have a parameterized reduction from this restriction of \PBS, parameterized by pathwidth, to \MCPP{} parameterized by pathwidth. 
As this restriction of \PBS{} is W[1]-hard by Theorem \ref{thm:mbsdaHardness}, we have the following theorem.

\begin{theorem}\label{thm:mcppTreewidthHardness}
 \MCPP{} is W[1]-hard parameterized by pathwidth.
\end{theorem}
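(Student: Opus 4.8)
The plan is to obtain Theorem~\ref{thm:mcppTreewidthHardness} by composing the W[1]-hardness of the restricted form of \PBS{} from Theorem~\ref{thm:mbsdaHardness} with the gadget reduction sketched above, and then checking that the composition is a genuine parameterized reduction with the parameter (pathwidth) preserved up to a computable function. So the overall structure is: start from an instance $(D,w,X,Y=\emptyset)$ of \PBS{} of the special shape guaranteed by Theorem~\ref{thm:mbsdaHardness} (all weights $0$ except one arc $a^*$ of weight $-1$, $a^*$ not in a double arc, no forbidden pairs), build the \MCPP{} instance $G$ by replacing each single arc of weight $0$, the single arc $a^*$ of weight $-1$, and each double arc by the corresponding {\sc Gadget}$(u,v)$, declare the target weight $W$ to be the total weight of the ``all-balanced'' solution of $G$, and argue the two-way correspondence between negative-weight properly balanced subgraphs of $D$ and cheaper-than-$W$ closed walks of $G$.

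First I would nail down the gadget analysis precisely: for each gadget, list the (at most two) closed-walk behaviours that traverse every standard arc/edge exactly once and every heavy arc/edge exactly once, compute the imbalance each leaves at $u$ and at $v$, and compute its weight; show that the ``unused'' mode is balanced at $u,v$ and the ``used'' mode leaves exactly the imbalance that the original arc (resp.\ double arc) would contribute to a subgraph $D'$, and that for every gadget except {\sc Gadget}$(u,v)$ for $a^*$ the two modes have equal weight while for the $a^*$-gadget the ``used'' mode is cheaper by exactly $1$. Using the ``$M$ large'' convention (which is legitimate because a heavy arc/edge is just a directed/undirected path of length $M$ and any minimal solution traverses each internal edge of such a path equally often), one argues that a minimum closed walk of $G$ uses each heavy element exactly once, hence decomposes gadget-by-gadget into these modes, and that a choice of modes yields a valid closed walk of $G$ exactly when the induced subgraph of $D$ (take $uv$, resp.\ the double arc, iff the corresponding gadget is in ``used'' mode) is balanced and respects the double arcs --- i.e.\ is properly balanced. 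This gives a weight-preserving bijection (up to the fixed additive constant $W$) between properly balanced subgraphs of $D$ and these ``canonical'' closed walks of $G$; since connectivity of $G$ is automatic from the construction and any closed walk can be extended to traverse every edge/arc without decreasing imbalance-parity arguments, the minimum over \MCPP{} is $W-1$ if some properly balanced subgraph of $D$ has negative weight and $W$ otherwise. Hence $G$ has an \MCPP{} solution of weight $<W$ iff the \PBS{} instance is a yes-instance.

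Second, I would verify the parameter bound. Each individual gadget is a graph on at most a constant number of vertices, so it has pathwidth $O(1)$; replacing an arc or double arc of $D$ (an object living on the two vertices $u,v$) by such a gadget is, from the point of view of a path decomposition, ``blowing up an edge into a bounded-size piece''. Concretely, take a path decomposition of the underlying undirected graph of $D$ of width $\pw(D)$; for each arc/double arc $uv$ of $D$ pick a bag containing both $u$ and $v$, insert just after it a short chain of new bags that contain $\{u,v\}$ together with the (constantly many) internal gadget vertices for $uv$, introducing and forgetting those internal vertices one at a time. This increases the width by only an additive constant, so $\pw(G) = \pw(D)+O(1)$; expanding heavy arcs/edges into paths of length $M$ changes nothing, since subdividing edges never increases pathwidth beyond $\max(\pw,1)$. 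Therefore the map $(D,w,X,Y)\mapsto (G,W)$ is computable in polynomial time and sends pathwidth to pathwidth up to an additive constant, i.e.\ it is a parameterized reduction from the restricted \PBS{} parameterized by pathwidth to \MCPP{} parameterized by pathwidth.

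Finally, combining this reduction with Theorem~\ref{thm:mbsdaHardness} (which asserts W[1]-hardness of exactly this restricted \PBS{} parameterized by pathwidth, and which in turn reduces from {\sc $k$-Multicolored Clique}) yields the claimed W[1]-hardness of \MCPP{} parameterized by pathwidth, and hence also parameterized by treewidth since $\tw(G)\le\pw(G)$. The step I expect to be the main obstacle is the gadget correctness argument --- in particular, proving rigorously that every minimum (or even every sufficiently light) \MCPP{} solution must traverse each heavy element exactly once and must decompose into the two intended per-gadget modes, so that no ``cheating'' global closed walk exists that beats $W$ without corresponding to a properly balanced subgraph; handling the boundary interactions between adjacent gadgets (the shared vertices $u,v$ and their degrees) and the parity/Eulerian feasibility conditions for assembling the chosen modes into a single closed walk is where the care is needed, and it is why the full version of this argument is deferred to the Appendix.
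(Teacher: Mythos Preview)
Your proposal is correct and follows essentially the same approach as the paper: reduce from the restricted \PBS{} of Theorem~\ref{thm:mbsdaHardness} via the three per-arc gadgets, show that with $M$ chosen large enough any optimal solution traverses each heavy element exactly once and hence decomposes gadget-by-gadget into a ``passive'' (balanced) or ``active'' (imbalanced like the original arc) mode, and establish the pathwidth bound by splicing constant-size gadget decompositions into a path decomposition of $D$. The paper's full proof makes your anticipated obstacle concrete by computing an explicit threshold ($M=m_1+3$) from the total gadget weights, and handles the ``assembly'' concern by observing that the imbalance of each vertex in the \MCPP{} solution equals its imbalance in the corresponding subgraph $D'$ of $D$, so balancedness transfers directly.
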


\section{\PBS{} Parameterized by Tree-depth}\label{sec4}

In this section we show that a certain restriction of \PBS{} is fixed-parameter tractable with respect to tree-depth. 
The restriction we require is that all arcs in double arcs have weight $0$, all arcs in forbidden pairs have weight $-1$, and all other arcs have weight $1$ or $-1$.
We choose this restriction, as this is the version of \PBS{} that we get when we reduce from \MCPP.

Aside from the familiar tool of Courcelle's theorem (cf. \cite{FlumGroheBook}), our main technical tool is Lemma \ref{lem:subgraphBound}, which shows that we may assume there exists a solution with size bounded by a function of tree-depth. The following simple observation will be useful in the proof of Lemma~\ref{lem:subgraphBound}.

\begin{lemma} \label{lem: combiningImbalances}
 Let $\{H_i: i \in {\cal I}\}$ be a family of pairwise arc-disjoint subgraphs of $G$, such that each $H_i$ respects double arcs. Then $H = \bigcup_{i \in {\cal I}} H_i$ is a properly balanced subgraph of $G$ if and only if $H$ is balanced and $H$ respects forbidden pairs.
\end{lemma}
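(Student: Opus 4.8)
The plan is to simply unfold the definition of \emph{properly balanced} --- a subgraph is properly balanced iff it is balanced, respects double arcs, and respects forbidden pairs --- and then observe that the claimed equivalence has content only in one spot.

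First I would dispatch the forward implication, which is immediate: if $H$ is properly balanced, then by definition it is balanced and it respects forbidden pairs, giving both conjuncts on the right-hand side.

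For the converse, I would assume $H$ is balanced and respects forbidden pairs, and show it then automatically respects double arcs; together with the two assumed properties this makes $H$ properly balanced. To see this, fix a double arc $(a,a')$ and suppose, without loss of generality, that $a \in A(H)$; it suffices to show $a' \in A(H)$. Since $A(H) = \bigcup_{i \in {\cal I}} A(H_i)$, there is some $i \in {\cal I}$ with $a \in A(H_i)$. As $H_i$ respects double arcs by hypothesis, $|A(H_i) \cap \{a,a'\}| \neq 1$, hence $a' \in A(H_i) \subseteq A(H)$. Thus $|A(H) \cap \{a,a'\}| \in \{0,2\}$ for every double arc, which is exactly the assertion that $H$ respects double arcs.

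I do not expect any real obstacle here --- the argument is pure bookkeeping with the definitions. The one point worth flagging is that the pairwise arc-disjointness hypothesis, though natural in the way the lemma is applied later (e.g.\ in the proof of Lemma~\ref{lem:subgraphBound}), is not needed for this equivalence: all that matters is that every arc of $H$ is witnessed in some $H_i$ and that each $H_i$ on its own already respects double arcs.
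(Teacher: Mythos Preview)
Your proposal is correct. The paper does not actually supply a proof of this lemma --- it is introduced only as a ``simple observation'' and left unproved --- so your straightforward unfolding of the definition of \emph{properly balanced} is exactly the intended argument, and your remark that pairwise arc-disjointness is not needed for the equivalence itself is accurate.
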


We are now ready to prove that any properly balanced subgraph decomposes into properly balanced subgraphs of size bounded by a function of tree-depth. This will allow us to assume, in Theorem \ref{thm:balancedSubgraphFPT}, that a solution has bounded size.

\begin{lemma}\label{lem:subgraphBound}
 Let $G$ be a directed multigraph (with double arcs and forbidden pairs) of tree-depth $k$, and let $H$ be a properly balanced subgraph of $G$. Then $H$ is a union of pairwise arc-disjoint graphs $H_i$, each of which is 
a properly balanced subgraph of $G$, with $|A(H_i)| \le f(k)$ where $f(k) = 2^{2^k}$.
\end{lemma}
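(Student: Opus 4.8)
The goal is to show that any properly balanced subgraph $H$ of a tree-depth-$k$ graph $G$ decomposes into arc-disjoint properly balanced subgraphs each using at most $2^{2^k}$ arcs. The plan is to induct on the tree-depth. Fix an embedding of $G$ (more precisely, of the connected component under consideration) in a rooted tree $T$ of depth $k$ with root $v$. The key structural fact is that every arc of $G$ joins a vertex to one of its ancestors or descendants in $T$; in particular every arc incident with a non-root vertex lies in the subtree rooted at that vertex together with the path up to the root. I would first reduce to the connected case: $G$ is a disjoint union of components, each of tree-depth $\le k$, and a properly balanced subgraph restricted to a component is again properly balanced (using Lemma~\ref{lem: combiningImbalances}), so it suffices to handle one component with root $v$.

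Now delete $v$ from $T$; the remaining forest has components $T_1,\dots,T_m$, each of depth $\le k-1$, inducing subgraphs $G_1,\dots,G_m$ of $G$ of tree-depth $\le k-1$. The subgraph $H$ is the union of $H[v]$-incident arcs (arcs with $v$ as an endpoint) together with arcs lying entirely inside some $G_j$. The induction hypothesis applied to each $G_j$ and to $H\cap G_j$ gives a decomposition of $H\cap G_j$ into properly balanced pieces of size $\le 2^{2^{k-1}}$. The obstacle is the arcs incident with $v$: these need not individually form balanced subgraphs, and we must pair them up into small balanced chunks. Here is where the doubly-exponential bound comes from. The natural approach: the number of "types" of a piece — recording, for each of the at most $2^{2^{k-1}}$-sized pieces touching $v$... — actually the cleaner way is to argue that we can greedily peel off a properly balanced subgraph of bounded size and recurse on what remains.

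Concretely, I would proceed as follows. Consider the multiset of arcs of $H$ incident with $v$. Since $H$ is balanced, at $v$ the in-degree equals the out-degree; pair each in-arc at $v$ with an out-arc at $v$. Following such a pair $(a_{\text{in}}, a_{\text{out}})$ leads into two (possibly equal) subtrees $T_i, T_j$; the portions of $H$ inside $G_i$ and $G_j$ can be assembled, using the inductive decomposition, into a properly balanced subgraph of $G_i$ (resp. $G_j$) together with a path/trail connecting the relevant boundary vertices — but to make the imbalance cancel we may need to take several inductive pieces. The key counting step: within $G_i$, the inductive decomposition yields pieces, and a piece interacts with the arcs at $v$ only through the vertices of $T_i$ adjacent to $v$; since $|V(T_i)|$ can be large this looks bad, but each piece $H_\ell$ has $\le 2^{2^{k-1}}$ arcs, so it touches at most $2\cdot 2^{2^{k-1}}$ vertices, and only those vertices that are children-or-descendants of $v$ reachable by a $v$-arc matter. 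Grouping the pieces of $\bigcup_j (H\cap G_j)$ by which "profile" of $v$-arcs they must be matched with, and observing there are at most $2^{2^{k-1}}$ arcs meeting $v$ in any one future-assembled block, one gets that a single properly balanced block of $H$ needs at most (number of $v$-arcs in the block) $+$ (sizes of the inductive pieces it absorbs) $\le 2^{2^{k-1}} \cdot 2^{2^{k-1}} = 2^{2^k}$ arcs after a careful accounting.

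I expect the main obstacle to be exactly this accounting — showing that one can always extract a single properly balanced subgraph whose arc count is bounded by $f(k-1)\cdot(\text{something}\le f(k-1))$ rather than something that blows up faster, so that the recursion $f(k) \le f(k-1)^2$ (equivalently $2^{2^k}$) closes. The cleanest route is probably: let $H'$ be a minimal nonempty properly balanced subgraph of $G$ contained in $H$; if $H'$ avoids $v$ it lies in some $G_j$ and by induction (applied inside the minimality argument) has size $\le f(k-1) \le f(k)$; if $H'$ contains $v$, then by minimality the arcs of $H'$ incident to $v$ are at most... and removing them leaves a family of properly balanced subgraphs of the $G_j$'s, each of which by induction is small, and minimality of $H'$ forces $H'$ to use at most one inductive piece from each relevant $G_j$ — bounding $|A(H')|$. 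Then set $H \leftarrow H \setminus A(H')$ (still properly balanced by Lemma~\ref{lem: combiningImbalances}, since we removed a properly balanced subgraph and double arcs are respected on both parts — here one must check forbidden pairs, which is where the restriction that both arcs of a forbidden pair cannot appear is used, so removing $H'$ cannot create a forbidden-pair violation) and repeat. Each iteration strictly decreases $|A(H)|$, so the process terminates with the desired arc-disjoint decomposition, and every extracted piece has $\le f(k) = 2^{2^k}$ arcs.
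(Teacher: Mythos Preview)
Your proposal has a genuine gap at the inductive step. You want to delete the root $v$ and apply the induction hypothesis to $H\cap G_j$ inside each subtree $G_j$; but $H\cap G_j$ is \emph{not} in general a properly balanced subgraph of $G_j$. Balance of $H$ at a vertex $u\in G_j$ adjacent to $v$ uses the $v$--$u$ arcs, and once those are stripped away the restriction $H\cap G_j$ is typically imbalanced at $u$. (For a concrete example: a directed triangle $v\to u\to w\to v$ in a depth-$3$ tree with root $v$, child $u$, grandchild $w$; removing the two arcs at $v$ leaves the single arc $u\to w$.) So the induction hypothesis does not apply, and you have no ``inductive pieces'' of $H\cap G_j$ to work with. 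Your fallback minimality argument inherits the same error: you assert that removing the $v$-incident arcs from a minimal properly balanced $H'$ ``leaves a family of properly balanced subgraphs of the $G_j$'s'', which is exactly the false step above. Even granting that, your bound would still be incomplete, since ``at most one inductive piece per $G_j$'' does not control the number of $G_j$'s touched without already knowing a bound on the number of $v$-incident arcs in $H'$.

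The paper's proof closes this gap with a different reduction: first reduce to the case that $H$ (hence the relevant block of $G$) is $2$-connected, so the root $x$ has a \emph{single} child $y$; then \emph{contract} $x$ and $y$ rather than deleting $x$. Contraction lowers the tree-depth by one and, crucially, preserves balance (the merged vertex is balanced because in $H$ the arcs into $\{x,y\}$ equal the arcs out). Now induction legitimately yields pieces $F_i$ each of size at most $f(k-1)$ that are balanced everywhere except possibly at $x$ and $y$, with imbalances $t_i$ at $x$ satisfying $|t_i|\le f(k-1)/2$ and $\sum_i t_i=0$. A pigeonhole argument on partial sums then extracts a sub-collection of at most $f(k-1)-1$ pieces whose imbalances cancel, giving a properly balanced $H_1$ with at most $(f(k-1)-1)f(k-1)<f(k-1)^2=f(k)$ arcs; peel it off and repeat. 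The contraction-plus-partial-sums idea is the missing ingredient in your plan.
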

\begin{proof}
We prove the claim by induction on the tree-depth $k$. For the base case, observe that if $k=1$ then $G$ has no arcs, and the claim is trivially true. 
So now assume that $k \ge 2$, and that the claim holds for all graphs of tree-depth less than $k$.
We also assume that $H$ is 2-connected, as otherwise a block decomposition of $H$ is a decomposition into properly balanced subgraphs, and we may apply our result to each block of $H$.
Similarly, if $G$ is not 2-connected but $H$ is, then $H$ lies inside one block of $G$, and we may restrict our attention to this block. 
Hence assume that $G$ is 2-connected as well.

 Let $G$ be embedded in a tree $T$ of depth $k$, and let $x$ be the root of $T$. Observe that $x$ has only one child in $T$, as otherwise $x$ is a cut-vertex in $G$.
 Let $y$ be this child, and let $G'$ be the multigraph derived from $G$ by identifying $x$ and $y$.
 Similarly, let $H'$ be the subgraph of $G'$ derived from $H$ by identifying $x$ and $y$. 
 Observe that $H'$ is balanced as $H$ is balanced and so the number of arcs into $\{x,y\}$ equals the number of arcs out of it. 
 Let $B$ be the set of arcs in $H$ between $x$ and $y$, and
 observe that there is a one-to-one correspondence between the arcs of $H'$ and the arcs of $H$ not in $B$.
 By identifying $x$ and $y$ in $T$, we get that $G'$ has tree-depth at most $k-1$.
 
 By the induction hypothesis, $H'$ can be partitioned into a family $\{H_i': i \in {\cal I}'\}$ of pairwise arc-disjoint properly balanced subgraphs of $G'$, each having at most $f(k-1)$ arcs.
 For each $i \in {\cal I}'$, let $F_i$ be the subgraph of $G$ corresponding to $H_i'$.
 Observe that $B$ can also be partitioned into a family $\{F_i: i \in {\cal I}''\}$ of subgraphs with at most $2$ arcs, that respect double arcs (we add any double arc from $B$ as a subgraph $F_i$, and add every other arc as a single-arc subgraph).
 
 Letting ${\cal J} = {\cal I}' \cup {\cal I}''$, we have that $\{F_i: i \in {\cal J}\}$ is a partition of $H$, each $F_i$ has at most $f(k-1)$ arcs, and each $F_i$ respects double arcs and is balanced everywhere except possibly at $x$ and $y$. We now combine sets of these subgraphs into subgraphs that are balanced everywhere.
 
 For each $i \in {\cal J}$, let $t_i$ be the imbalance of $F_i$ at $x$, i.e. $t_i = d_{F_i}^+(x) - d_{F_i}^-(x)$. 
 Observe that $|t_i|\le \frac{f(k-1)}{2}$ for each $i$ and, as $H$ is balanced, $\sum_{i \in {\cal J}}t_i = 0$.
We now show that there exists a subset ${\cal J}' \subseteq {\cal J}$ such that $|{\cal J}'| \le f(k-1)-1$ and $\sum_{i \in {\cal J}}t_i = 0.$
 To see this, 
 let ${\cal J}_1$ be a set containing a single $t_i$, of minimum absolute value, 
 and iteratively construct sets ${\cal J}_r$ by adding $i$ such that $t_i < 0$ to  ${\cal J}_{r-1}$ if $\sum_{p \in {\cal J}_{r-1}}t_p > 0$, and adding $i$ such that $t_i > 0$ otherwise. 
 Now note that either $t_i=\pm f(k-1)/2$ for every $i \in {\cal J}$, in which case we have a subset ${\cal J}'$ with $|{\cal J}'|=2$, 
or $|\sum_{i \in {\cal J}_r}t_i| < \frac{f(k-1)}{2}$ for each $r$, 
and therefore there are at most $f(k-1)-1$ possible values that $\sum_{i \in {\cal J}_r}t_i$ can take. 
 Then there exist $r,r'$ such that $r'<r$, $r-r' \le f(k-1)-1$, and  $\sum_{i \in {\cal J}_r\setminus {\cal J}_{r'}}t_i = 0$. 
So let ${\cal J}'={\cal J}_r\setminus {\cal J}_{r'}$. 
 
 Now let $H_1 = \bigcup_{i \in {\cal J}'}F_i$. Then by construction, $H_1$ is balanced at every vertex (as it is balanced for every vertex other than $y$, and a directed multigraph graph cannot be imbalanced at a single vertex), and $H_1$ respects double arcs. As $H_1$ is a subgraph of $H$, $H_1$ also respects forbidden pairs. Therefore $H_1$ is a properly balanced subgraph, with number of arcs at most $(f(k-1)-1)f(k-1)$.
Observe that $f(k)=2^{2^k}$ is a solution to the recursion $(f(k-1)-1)f(k-1)<f(k)$ with $f(1)=4$. Thus, $H_1$ has at most $2^{2^k}$ arcs, as required.

 By applying a similar argument to $H \setminus H_1$, we get a properly balanced subgraph $H_2$ with at most $f(k)$ arcs. 
 Repeating this process, we get a partition of $H$ into properly balanced subgraphs each with at most $f(k)$ arcs.
 \qed
\end{proof}

Using Lemma \ref{lem:subgraphBound}, we may now assume that if $G$ has a properly balanced subgraph with negative weight, then it has a properly balanced subgraph of negative weight with at most $f(k)$ arcs (as any negative weight properly balanced subgraph can be partitioned into properly balanced subgraphs of at most $f(k)$ arcs, at least one of which must have negative weight).

\subsection{Fixed-Parameter Tractability of \PBS}

As the tree-depth of $G$ is at most $k$, it follows that it has pathwidth at most $k-1$ \cite{BodlaenderALGO1995}.
Using this fact, and the fact that we may assume that a solution has at most $f(k)$ arcs, we have the following:

\begin{theorem}\label{thm:balancedSubgraphFPT}
 \PBS{} is FPT with respect to tree-depth, provided that 
all arcs in double arcs have weight $0$, all arcs in forbidden pairs have weight $-1$, and all other arcs have weight $1$ or $-1$.
\end{theorem}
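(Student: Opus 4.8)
The plan is to combine the two ingredients that have just been assembled: the structural decomposition of Lemma~\ref{lem:subgraphBound} and the expressive power of monadic second-order logic on graphs of bounded treewidth (Courcelle's theorem). By Lemma~\ref{lem:subgraphBound}, if $G$ has \emph{any} properly balanced subgraph of negative weight, then it has one using at most $f(k)=2^{2^k}$ arcs. Under the weight restriction in the statement, a properly balanced subgraph $H$ has negative weight precisely when the number of weight-$(-1)$ arcs it uses strictly exceeds the number of weight-$1$ arcs it uses; since $|A(H)|\le f(k)$, a negative-weight solution uses at most $f(k)$ weight-$(-1)$ arcs and at most $f(k)-1$ weight-$1$ arcs. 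Hence it suffices to decide, for each pair $(p,q)$ with $0\le q<p\le f(k)$, whether $G$ contains a properly balanced subgraph using exactly $p$ arcs of weight $-1$ and exactly $q$ arcs of weight $1$; there are only $O(f(k)^2)$ such pairs, so an FPT algorithm for each pair yields an FPT algorithm overall.

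Next I would express, for fixed $p,q$, the existence of such a subgraph as a single MSO$_2$ sentence (quantifying over sets of arcs/edges) of length bounded by a function of $k$ only. The witness is a set $S$ of arcs of $G$; the sentence asserts: $S$ partitions into $p$ designated weight-$(-1)$ arcs and $q$ designated weight-$1$ arcs (expressible by $p+q\le 2f(k)$ existentially quantified element variables and atomic incidence formulas recording which weight class each belongs to — note weights are only $-1,0,1$ so ``weight $-1$'' etc.\ can be encoded as fixed unary predicates, or, since arcs are part of the input, simply as membership in fixed input-defined sets); that $S$ respects every double arc and every forbidden pair (each such constraint is a local condition on two named arcs, and although there may be many pairs, ``$S$ respects double arcs'' is uniformly expressible by a single quantifier-free formula using the double-arc incidence relation, and likewise for forbidden pairs); and that the directed multigraph $(V,S)$ is balanced, i.e.\ at every vertex the number of in-arcs in $S$ equals the number of out-arcs in $S$ — but since $|S|$ is bounded by $f(k)$, in-/out-degrees are bounded by $f(k)$, so balance is expressible by saying there is no vertex with strictly more in- than out-arcs of $S$, which after guessing the $\le 2f(k)$ arcs explicitly is just a Boolean combination of incidence atoms. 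Because $G$ has treewidth at most $k-1$ (tree-depth $k$), Courcelle's theorem decides this sentence in time $g(k,\varphi)\cdot |G|$; since $|\varphi|$ depends only on $p,q\le f(k)$, which depend only on $k$, this is FPT time.

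The main obstacle — and the point that needs the most care — is making sure every part of the specification is genuinely expressible in MSO$_2$ with sentence length depending on $k$ alone, and in particular that the global arithmetic conditions (balance at every vertex, and the exact counts $p$ of negative and $q$ of positive arcs) do not force a sentence whose size grows with $|G|$. The resolution is exactly the bound from Lemma~\ref{lem:subgraphBound}: once we know a candidate solution has at most $f(k)$ arcs, we can simply existentially quantify those $\le f(k)$ arcs as first-order element variables, after which counting and balance become fixed-size Boolean formulas over the incidence structure. The handling of double arcs and forbidden pairs is the other delicate point, but ``$S$ respects double arcs'' unfolds to $\forall a\,\forall a'\,\bigl(\mathrm{DoubleArc}(a,a')\to(a\in S\leftrightarrow a'\in S)\bigr)$ and similarly ``$S$ respects forbidden pairs'' to $\forall b\,\forall b'\,\bigl(\mathrm{ForbiddenPair}(b,b')\to\neg(b\in S\wedge b'\in S)\bigr)$, each of constant size, provided the input relational structure for Courcelle's theorem is set up with the binary relations $\mathrm{DoubleArc}$ and $\mathrm{ForbiddenPair}$ (which does not change the treewidth of the underlying graph, since these relations only connect arcs that already lie between the same pair of vertices). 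Putting the pieces together: guess $(p,q)$, write the constant-size (in $|G|$) MSO$_2$ sentence, and run Courcelle's theorem on the width-$(k-1)$ tree decomposition; the total running time is $f'(k)\cdot |G|$, establishing that \PBS{} under the stated weight restriction is FPT parameterized by tree-depth.
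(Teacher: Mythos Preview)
Your approach is essentially the paper's: bound the solution size by $f(k)$ via Lemma~\ref{lem:subgraphBound}, then invoke Courcelle's theorem on a graph of treewidth at most $k-1$. The paper packages the second step slightly differently---it enumerates all labelled balanced multigraphs $D'$ on at most $f(k)$ arcs (with labels \textsc{Double}, \textsc{Forbidden}, \textsc{Negative}, \textsc{Positive}), filters for those that are properly balanced of negative weight, and then uses Courcelle to test label-respecting subgraph isomorphism---whereas you fold everything into a single MSO sentence per count profile; the two are equivalent.

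One slip to fix: when you say ``$S$ partitions into $p$ designated weight-$(-1)$ arcs and $q$ designated weight-$1$ arcs'' you have silently forbidden weight-$0$ arcs, but double arcs carry weight $0$ and may well be needed in a solution. You should iterate over triples $(p,q,r)$ with $p+q+r\le f(k)$ and $p>q$, and existentially quantify $r$ additional weight-$0$ arc variables. You effectively acknowledge this later (``existentially quantify those $\le f(k)$ arcs''), so this is an expositional inconsistency rather than a conceptual error.
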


\begin{proof}


Let $D$ be an instance of \PBS{} and let $k$ be the tree-depth of $D$.
By Lemma \ref{lem:subgraphBound}, we may assume that a solution has at most $f(k)$ arcs.
We will prove the theorem by guessing all possible solutions $D'$ (up to isomorphism), and then checking whether $D'$ is isomorphic to a subgraph of $D$ using Courcelle's theorem.


First, we assign each arc $a$ of $D$ the label 
{\sc Double} if $a$ is in a double arc, the label {\sc Forbidden} if $a$ is in a forbidden pair,
and otherwise the label
{\sc Negative} if $a$ has weight $-1$ and  {\sc Positive} if $a$ has weight $1$.
Next, enumerate all balanced directed multigraphs $D'$ with at most $f(k)$ arcs, together with all possible labelings $\lambda:(A(D')) \rightarrow \{\textsc{Negative}, \textsc{Positive}, \textsc{Double}, \textsc{Forbidden}\}$. 
This can be done in time $O^*((5f(k)^2)^{f(k)})$.

For each constructed directed multigraph $D'$, we first check whether $D'$ would be a properly balanced subgraph of $D$, if it existed as a subgraph of $D$. This can be done by checking, for each pair of vertices, that $D'$ has at most one arc labelled {\sc Forbidden}, and either $0$ or $2$ arcs (in the same direction) labelled {\sc Double}.
If $D'$ is not a properly balanced subgraph, we disregard it. Otherwise, we then check whether $D'$ has negative weight. (This can be done by checking that the number of {\sc Negative} and {\sc Forbidden} arcs is greater than the number of {\sc Positive} arcs.)

If $D'$ has negative weight, it remains to check whether $D'$ is isomorphic to a subgraph of $D$ (respecting the labels).
As $D$ has treewidth at most $k$ and $D'$ has at most $f(k)$ vertices, this can be done in fixed-parameter time by Courcelle's theorem  
(see Theorem 11.37 in \cite{FlumGroheBook}).

\qed
\end{proof}

\section{Positive Result: Reducing \MCPP{} to \PBS} \label{sec5}

In this section, we consider \MCPP{} with all weights equal 1 parameterized by tree-depth.
In contrast to pathwidth, we will show that \MCPP{} parameterized by tree-depth is FPT. Hereinafter, $b_H(v)$ denotes the imbalance of $H$, i.e. $d_H^+(v)-d_H^-(v)$.
\add{In the problem \compMCPP, we are given an instance of \MCPP{} together with a solution $H$, and asked to find a solution $H'$ of weight less than $w(H)$, if one exists. }
To solve an instance of \MCPP, it would be enough to find some (not necessarily optimal) solution of weight $M$, then repeatedly apply \remove{the following problem} \compMCPP{} to find better solutions, until we find a solution which cannot be improved by \compMCPP{} and is therefore optimal. As \compMCPP{} returns an improved solution if one is available, we would have to apply \compMCPP{} at most $M$ times.

%
%

 To show that our approach leads to an FPT algorithm for \MCPP, we first show that we may assume that $M$ is bounded by an appropriate value.
 
 \begin{lemma}\label{lem:naiveCPPBound}
  Given an instance $(G,w)$ of \MCPP{} with $m$ arcs and edges, we can, in polynomial time, find closed walk of of $G$ that traverses each edge and arc at least once, if such a walk exists, and this walk traverses each arc at most $m+1$ times. 
 \end{lemma}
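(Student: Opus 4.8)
The plan is to construct the desired closed walk greedily by first finding any strongly connected spanning sub-multigraph that covers all edges and arcs, and then arguing that a natural "shortcut-free" choice of such a covering walk cannot use any arc too many times. First I would observe that $G$ is strongly connected by hypothesis, so a covering closed walk exists; indeed, orienting each edge arbitrarily gives a strongly connected directed multigraph $D_0$ whose underlying arc set has size $m$, and any Eulerian-type argument on a balanced super-multigraph of $D_0$ yields a closed walk. The real content is the bound $m+1$ on the multiplicity of each arc.

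The key step is the following: take a covering closed walk $W$ (equivalently, a connected Eulerian multigraph $H$ with $A(G') \subseteq A(H)$ for some fixed orientation $G'$ of $G$) that is \emph{minimal} with respect to total length among all covering closed walks; alternatively, one that cannot be locally shortened. I would show that in such a minimal $W$, for every arc $a$, the multiplicity $\mu_W(a)$ is at most $m+1$. The argument is a counting/exchange argument: if some arc $a = uv$ were traversed at least $m+2$ times, then consider the multiset of "passes" over $a$. Between consecutive uses of $a$ (in cyclic order along $W$), $W$ traces a closed sub-walk from $v$ back to $u$; there are at least $m+1$ such segments, and each nonempty segment must contain at least one arc or edge of $G$ that is not covered elsewhere, or else that segment could be deleted (after possibly pairing up two opposite traversals of $a$), contradicting minimality. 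Since $G$ has only $m$ edges and arcs, by pigeonhole at most $m$ of these segments can be "necessary", so at least one is deletable, contradicting minimality. This gives $\mu_W(a) \le m+1$.

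To make this algorithmic and polynomial-time, I would not literally minimize over all walks; instead I would produce a specific covering walk and then apply a polynomial number of local reductions. Concretely: fix an orientation $G'$ of $G$; compute a minimum-size set of extra arcs (a "flow") making $G'$ balanced and the whole thing connected — this is the standard DCPP-type min-cost flow computation and runs in polynomial time, giving a multigraph $H$; then take an Eulerian closed walk of $H$. Finally, while some arc $a$ is used more than $m+1$ times, locate a deletable segment as in the previous paragraph (this is a reachability/search computation) and remove it; each removal strictly decreases $|A(H)|$, so the loop terminates after polynomially many iterations, and correctness follows from the exchange argument.

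The main obstacle I anticipate is the bookkeeping in the exchange argument when $W$ traverses the \emph{edge} underlying $a$ in both directions, or when deleting a segment would disconnect $H$ or leave some edge/arc of $G$ uncovered. The clean way around this is to always delete a segment together with one matching opposite traversal when the segment is "empty" of new coverage, so that parity and balance are preserved, and to choose the segment to be an innermost one (containing no repeated occurrence of $a$ inside it), which guarantees it cannot be required for covering anything that is not also covered by another segment. Handling these cases carefully — and confirming that connectivity of $H$ is maintained throughout, using strong connectivity of $G$ — is where the detailed work lies, but none of it requires anything beyond elementary multigraph arguments.
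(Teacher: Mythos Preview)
Your pigeonhole/exchange argument for the multiplicity bound is sound, but there is a genuine gap in how you obtain the initial covering walk, and the whole route is far heavier than what the lemma requires. The claim that ``orienting each edge arbitrarily gives a strongly connected directed multigraph $D_0$'' is false: take $G$ to be an undirected path on three vertices, which is strongly connected as a mixed graph, yet \emph{no} orientation of its two edges is strongly connected. When $D_0$ fails to be strongly connected, the DCPP min-cost-flow step you invoke has no feasible solution (adding parallel copies of arcs already in $D_0$ cannot create strong connectivity), so you never obtain the Eulerian multigraph $H$ on which to run your local reductions. This is repairable --- replace each edge by two opposite arcs, say --- but then you must separately bound $|A(H)|$ to certify that the reduction loop terminates in polynomial time, and the argument has grown well beyond what the statement warrants.

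The paper bypasses all of this with a direct construction. For each edge $a$, let $\textsc{walk}_a$ cross $a$ once in each direction; for each arc $a=uv$, let $\textsc{walk}_a$ be $a$ followed by any simple walk from $v$ to $u$ (which exists by strong connectivity of $G$ and is found in polynomial time). The concatenation of these $m$ closed walks is a covering closed walk in which any fixed arc or edge $b$ is traversed at most twice by $\textsc{walk}_b$ and at most once by each of the other $m-1$ walks, hence at most $m+1$ times in total. No minimality, no flow computation, no local surgery. Your exchange argument, once the orientation issue is patched, would also work (indeed the pigeonhole already fires at $k\ge m$), but it is the wrong weight of tool for this lemma.
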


As with the hardness proof, we will use \PBS{} as an intermediate problem.
We now reduce \compMCPP{} to \PBS, in the following sense: For any input graph $G$ and initial solution $H$, we produce a directed multigraph $D$ (with double arcs and forbidden pairs), such that $D$ has a properly balanced subgraph of negative weight if and only if $G$ has a solution of weight less than $w(H)$.

For any adjacent vertices $u,v$ in $G$, let $G_{uv}$ be the subgraph of $G$ induced by $\{u,v\}$, and similarly let $H_{uv}$ be the subgraph of $H$ induced by $\{u,v\}$. Let $M = w(H)$. Thus, we may assume that any improved solution has weight less than $M$.
By Lemma \ref{lem:naiveCPPBound} and the assumption that the weight of every arc and edge is 1, we may assume $M \le m^2+m$.
 

For each edge and arc $uv$ in $G$, we will produce a gadget $D_{uv}$, based on $G_{uv}$ and $H_{uv}$ and the value $M$. The gadget $D_{uv}$ is a directed multigraph, possibly containing double arcs or forbidden pairs, and by combining all the gadgets, we will get an instance $D$ of \PBS. 


We now construct $D_{uv}$ according to the following cases (roughly speaking, a positive weight arc represents adding an arc in that direction, and a negative weight arc represents removing an arc in the opposite direction):

{\bf If $G_{uv}$ is an arc from $u$ to $v$ and $H_{uv}$ traverses $uv$  $t \le M$ times:} Then $D_{uv}$ has $t-1$ arcs from $v$ to $u$ of weight $-1$, and $M-t$ arcs from $u$ to $v$ of weight $1$.

 {\bf If $G_{uv}$ is an edge between $u$ and $v$, and $H_{uv}$ traverses $uv$ from $u$ to $v$ $t \le M$ times, and from $v$ to $u$ $0$ times:}
Then $D_{uv}$ has a double arc $(a,a')$, where $a$ and $a'$ are both arcs from $v$ to $u$ of weight $0$. In addition, $D_{uv}$ has $t-1$ arcs from $v$ to $u$ of weight $-1$, $M-t$ arcs from $u$ to $v$ of weight $1$, and $M-1$ arcs from $v$ to $u$ of weight $1$.

 {\bf If $G_{uv}$ is an edge between $u$ and $v$ and $H_{uv}$ traverses $uv$ from $u$ to $v$ $t>0$ times, and from $v$ to $u$ $s>0$ times:}
 Then we may assume $s=t=1$, as otherwise we may remove a pair of arcs $(uv,vu)$ from $H$ and get a better solution to \MCPP{}. 
 Then $D_{uv}$ has $M-1$ arcs from $u$ to $v$ of weight $1$, $M-1$ arcs from $v$ to $u$ of weight $1$, and a forbidden pair $(a,a')$, where $a$ is an arc from $u$ to $v$, $a'$ is an arc from $v$ to $u$, and both $a$ and $a'$ have weight $-1$.



\begin{lemma}\label{lem:changeGadgets}
  Let $uv$ be an edge or arc in $G$, and let $B$ and $W$ be arbitrary integers such that $w(H_{uv})+W \leq M$. 
  Then the following are equivalent.
\begin{enumerate}
 \item\label{condH} There exists a graph $H'_{uv}$ with vertex set $\{u,v\}$ that covers $G_{uv}$, 
 such that  $w(H'_{uv}) = w(H_{uv})+W$ and $b_{H'_{uv}}(u) = b_{H_{uv}}(u) + B$;
 \item\label{condD} $D_{uv}$ has a subgraph $D'_{uv}$ which respects double arcs and forbidden pairs, such that $w(D'_{uv}) = W$ and $b_{D'_{uv}}(u) = B$.
\end{enumerate}
\end{lemma}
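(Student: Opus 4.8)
The plan is to verify the equivalence by a direct, case-by-case correspondence between graphs $H'_{uv}$ on $\{u,v\}$ covering $G_{uv}$ and subgraphs $D'_{uv}$ of $D_{uv}$ respecting double arcs and forbidden pairs, tracking the two bookkeeping quantities: the weight change $W$ and the imbalance change $B$ at $u$. Since every gadget $D_{uv}$ is defined by one of the three cases in the construction, I would split the proof into the same three cases. In each case, a candidate $H'_{uv}$ is determined entirely by how many times it traverses $uv$ in each direction (it must cover $G_{uv}$, so at least once in the relevant direction(s)), and a candidate $D'_{uv}$ is determined by how many arcs of each weight/direction class it selects, subject to the double-arc and forbidden-pair constraints. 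The heart of the argument is then just matching these finitely many parameters.

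Concretely, for the first case ($G_{uv}$ an arc $u\to v$ traversed $t$ times by $H$): an admissible $H'_{uv}$ traverses $uv$ some number $p\ge 1$ of times, giving $W=p-t$ and $B = p-t$ (since each copy of $uv$ adds $+1$ to $b(u)$ is wrong — actually $uv$ leaves $u$, so it contributes $+1$ to $d^+(u)$; I would be careful here and note $b_{H'_{uv}}(u)-b_{H_{uv}}(u)=p-t$, hence $B=W$ in this case, consistent with $D_{uv}$ having only arcs whose weight equals their signed contribution to $b(u)$). Picking $i$ of the $(t-1)$ weight-$(-1)$ arcs $v\to u$ and $j$ of the $(M-t)$ weight-$1$ arcs $u\to v$ gives $W = j-i$ and $b_{D'_{uv}}(u)= j-i$; one checks the ranges of $i,j$ realize exactly the same set of pairs $(W,B)$ with $B=W$ as the choices of $p$ do, using $w(H_{uv})+W\le M$ to bound $p\le M$. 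For the edge cases I would do the analogous count: in the one-directional edge case, the double arc $(a,a')$ of weight $0$ lets $H'_{uv}$ flip orientation of a traversal ``for free'' in terms of weight but changes $b(u)$, the $M-1$ extra reverse arcs of weight $1$ allow reversing traversals at a cost, and the $t-1$ negative/$M-t$ positive arcs behave as before; in the bidirectional edge case, one first invokes the stated reduction to $s=t=1$, then checks the forbidden pair $(a,a')$ (which can be included together, removing one traversal in each direction for total weight $-2$ and imbalance $0$, exactly matching deleting a $uv,vu$ pair) together with the $M-1$ arcs in each direction.

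The main obstacle is not conceptual but combinatorial care: in each case I must check that the set of realizable pairs $(W,B)$ coincides \emph{exactly} on the two sides, including the boundary constraints — every $H'_{uv}$ must still cover $G_{uv}$ (at least one traversal in the required direction, so some selection counts have minimum $1$ not $0$, which is why the gadget has $t-1$ rather than $t$ negative arcs), and the hypothesis $w(H_{uv})+W\le M$ must be used to cap the number of traversals so that the positive-arc supply ($M-t$, $M-1$, etc.) is exactly enough and never short. A secondary subtlety is that $D'_{uv}$ ``respects double arcs and forbidden pairs'' is a local constraint that must be shown to impose precisely the admissibility conditions on the corresponding $H'_{uv}$ and no more. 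Once the three cases are checked, the equivalence of \ref{condH} and \ref{condD} follows immediately, and this is the statement.
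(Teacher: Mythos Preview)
Your overall plan---splitting into the three gadget cases and matching the realizable pairs $(W,B)$ on both sides by counting traversals versus selected arcs---is exactly the approach the paper takes, and your handling of the first two cases is essentially correct.

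However, there is a genuine error in your treatment of the third case. You describe the forbidden pair $(a,a')$ as something that ``can be included together, removing one traversal in each direction for total weight $-2$ and imbalance $0$.'' This is precisely what a forbidden pair \emph{forbids}: respecting forbidden pairs means $|A(D'_{uv})\cap\{a,a'\}|\neq 2$, so $D'_{uv}$ may contain at most one of $a,a'$. The correct reading is that selecting $a$ (the weight-$(-1)$ arc $u\to v$) alone contributes $W=-1$, $B=+1$ and corresponds on the $H'$ side to deleting the single $v\to u$ traversal from $H_{uv}$; symmetrically, selecting $a'$ alone corresponds to deleting the single $u\to v$ traversal. The reason both cannot be selected simultaneously is exactly that $H'_{uv}$ must still cover the edge, so at least one traversal must remain. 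Thus the forbidden-pair constraint on the $D$ side encodes the covering constraint on the $H$ side, not the ``delete a $uv,vu$ pair'' operation you describe (which, incidentally, would be redundant anyway since the reduction already assumes $s=t=1$). Once you correct this, the parameter-matching in the third case goes through just as in the paper.
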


Note that in a graph $H''$ with two vertices $u$ and $v$, $b_{H''}(u) = - b_{H''}(v)$. Thus, in addition to implying that $b_{D'_{uv}}(u) =  b_{H'_{uv}}(u) - b_{H_{uv}}(u)$, the claim also implies that 
$b_{D'_{uv}}(v) = b_{H'_{uv}}(v) - b_{H_{uv}}(v)$.



\begin{lemma}\label{lem:MCPPtoPBSCorrectness}
 Let $D$ be the directed multigraph derived from $G$ and $H$ by taking the vertex set $V(G)$ and adding the gadget $D_{uv}$ for every arc and edge $uv$ in $G$.
 Then there exists a solution $H'$ with weight less than $H$ if and only if $D$ has a properly balanced subgraph of weight less than $0$.
\end{lemma}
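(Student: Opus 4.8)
The plan is to prove the equivalence by gluing together, arc by arc and edge by edge, the local correspondence of Lemma~\ref{lem:changeGadgets}. The key structural facts are that $D$ has vertex set $V(G)$, that the gadgets $D_{uv}$ for distinct arcs/edges of $G$ are pairwise arc-disjoint (each $D_{uv}$ is just a bundle of parallel arcs between $u$ and $v$), and that every double arc and every forbidden pair of $D$ lies inside a single gadget. Consequently, for a subgraph $D'$ of $D$ we may write $D'_{uv} := D' \cap D_{uv}$; for an \MCPP{} solution $H'$ we may write $H'_{uv}$ for the sub-multigraph consisting of the traversals of the arc/edge between $u$ and $v$; in both cases the object is the arc-disjoint union of its pieces, it respects double arcs (resp.\ forbidden pairs) iff every piece does, and its imbalance at any vertex $v$ is the sum of the imbalances of the pieces at $v$.

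For the forward direction, let $H'$ be an \MCPP{} solution with $w(H') < w(H) = M$. For each arc/edge $uv$ set $W_{uv} = w(H'_{uv}) - w(H_{uv})$ and $B_{uv} = b_{H'_{uv}}(u) - b_{H_{uv}}(u)$. As weights are $1$, $w(H_{uv}) + W_{uv} = w(H'_{uv}) \le w(H') < M$, so Lemma~\ref{lem:changeGadgets} applies with $H'_{uv}$ witnessing condition~(\ref{condH}); its condition~(\ref{condD}) yields a subgraph $D'_{uv}$ of $D_{uv}$ respecting double arcs and forbidden pairs with $w(D'_{uv}) = W_{uv}$ and $b_{D'_{uv}}(u) = B_{uv}$ (hence also $b_{D'_{uv}}(v) = b_{H'_{uv}}(v) - b_{H_{uv}}(v)$, by the remark after the lemma). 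Put $D' = \bigcup_{uv} D'_{uv}$. Then $w(D') = \sum_{uv} W_{uv} = w(H') - w(H) < 0$; $D'$ respects double arcs and forbidden pairs since each piece does; and for every $v$, $b_{D'}(v) = \sum_{uv}\big(b_{H'_{uv}}(v) - b_{H_{uv}}(v)\big) = b_{H'}(v) - b_H(v) = 0$, so $D'$ is balanced. By Lemma~\ref{lem: combiningImbalances}, $D'$ is a properly balanced subgraph of $D$ of negative weight.

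For the backward direction, let $D'$ be a properly balanced subgraph of $D$ with $w(D') < 0$, and put $W_{uv} = w(D'_{uv})$, $B_{uv} = b_{D'_{uv}}(u)$ for each $uv$. Applying the implication (\ref{condD})$\Rightarrow$(\ref{condH}) of Lemma~\ref{lem:changeGadgets} to each $D'_{uv}$ (which respects double arcs and forbidden pairs) produces a graph $H'_{uv}$ on $\{u,v\}$ covering $G_{uv}$ with $w(H'_{uv}) = w(H_{uv}) + W_{uv}$ and $b_{H'_{uv}}(u) = b_{H_{uv}}(u) + B_{uv}$. Let $H' = \bigcup_{uv} H'_{uv}$. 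Then $H'$ covers every arc and edge of $G$; $w(H') = \sum_{uv} (w(H_{uv}) + W_{uv}) = w(H) + w(D') < w(H)$; and $b_{H'}(v) = b_H(v) + b_{D'}(v) = 0$ for every $v$. Since $G$ is connected and $H'$ contains at least one copy of each of its arcs and edges, $H'$ is a connected balanced directed multigraph on $V(G)$, hence Eulerian, and an Eulerian closed walk of $H'$ is an \MCPP{} solution of $G$ of weight $< w(H)$, as required.

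The step needing care is the backward invocation of Lemma~\ref{lem:changeGadgets}, because that lemma carries the side condition $w(H_{uv}) + W \le M$, whereas a gadget restriction $D'_{uv}$ might a priori have weight large enough to violate $w(H_{uv}) + W_{uv} \le M$. This is only a superficial difficulty: the implication (\ref{condD})$\Rightarrow$(\ref{condH}) holds for every subgraph $D'_{uv}$ of $D_{uv}$ with no upper bound on $W_{uv}$, since the $H'_{uv}$ built in the proof of Lemma~\ref{lem:changeGadgets} always has non-negative multiplicities summing to at least $1$ (the side condition is used only for the reverse implication, to guarantee that $D_{uv}$ contains enough weight-$1$ arcs); alternatively one may first prune $D'$ — deleting from any gadget a pair of weight-$1$ arcs in opposite directions keeps $D'$ properly balanced and only decreases $w(D')$ — which brings all but a harmless edge-gadget sub-case within the stated hypothesis. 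Apart from this, and from noting connectivity of the glued $H'$, everything is bookkeeping: weights add over gadgets, imbalances cancel because $H$, $H'$ and $D'$ are all balanced, and double arcs and forbidden pairs never cross gadget boundaries.
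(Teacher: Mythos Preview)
Your proof is correct and follows the same decompose-and-glue strategy as the paper: split $H'$ (resp.\ $D'$) into its restrictions $H'_{uv}$ (resp.\ $D'_{uv}$) over each arc/edge of $G$, invoke Lemma~\ref{lem:changeGadgets} locally, and reassemble, checking that weights and imbalances add. The one substantive technical difference is how the side condition $w(H_{uv})+W\le M$ is handled in the backward direction. The paper verifies it by a short bootstrapping argument: first treat the pieces with $W_{uv}<0$ (for which $w(H_{uv})+W_{uv}<w(H_{uv})\le M$ is automatic), deduce from the resulting $H'_{uv}$ that $w(H_{uv})+W_{uv}\ge 1$ for every $uv$, and then use $\sum_{uv}(w(H_{uv})+W_{uv})=w(H)+w(D')<M$ together with positivity of all terms to conclude that each term is $<M$. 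You instead observe---correctly, by inspection of the proof of Lemma~\ref{lem:changeGadgets}---that the implication (\ref{condD})$\Rightarrow$(\ref{condH}) never actually uses the upper bound, so the hypothesis may simply be dropped for that direction. Your alternative pruning suggestion is less tidy and, as you yourself flag, leaves a residual sub-case; the first fix is the one to keep. You also make explicit the connectivity of $H'$ (needed to obtain a single closed walk), which the paper leaves implicit.
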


\add{Lemma \ref {lem:MCPPtoPBSCorrectness} implies that we have a parameterized reduction from \compMCPP{} parameterized by tree-depth to \PBS{} parameterized by tree-depth.}
Then by Theorem \ref{thm:balancedSubgraphFPT}, we have the following theorem.

\begin{theorem}\label{thm:mcppFPT}
 \MCPP{} with all weights equal 1 is FPT with respect to tree-depth.
\end{theorem}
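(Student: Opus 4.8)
\textbf{Proof plan for Theorem \ref{thm:mcppFPT}.}

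The plan is to combine the three ingredients that have already been assembled in this section: the search-to-optimization reduction via \compMCPP, the polynomial bound on the makespan $M$ from Lemma \ref{lem:naiveCPPBound}, and the reduction of \compMCPP{} to \PBS{} from Lemma \ref{lem:MCPPtoPBSCorrectness}, all capped off by the FPT algorithm for \PBS{} parameterized by tree-depth from Theorem \ref{thm:balancedSubgraphFPT}. First I would fix an instance $(G,w)$ of \MCPP{} with all weights $1$, and let $k$ be the tree-depth of $G$. Using Lemma \ref{lem:naiveCPPBound}, I find in polynomial time an initial closed walk $H$ traversing every edge and arc at least once, with $w(H) = M \le m^2 + m$ where $m = |E| + |A|$; if no such walk exists then $G$ is not strongly connected and we reject. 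This gives the starting solution and simultaneously the promised bound on how many improvement steps will be needed.

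Next I would iterate \compMCPP. Given the current solution $H$ with $w(H) = M' \le m^2 + m$, I build the directed multigraph $D$ with double arcs and forbidden pairs as in Lemma \ref{lem:MCPPtoPBSCorrectness}: vertex set $V(G)$, and one gadget $D_{uv}$ per edge/arc $uv$, constructed according to the case analysis preceding Lemma \ref{lem:changeGadgets}. By Lemma \ref{lem:MCPPtoPBSCorrectness}, $D$ has a properly balanced subgraph of negative weight if and only if $G$ has a solution of weight less than $w(H)$. Moreover the weights of $D$ fall into exactly the restricted pattern required by Theorem \ref{thm:balancedSubgraphFPT} (arcs in double arcs have weight $0$, arcs in forbidden pairs have weight $-1$, all others have weight $1$ or $-1$), so I may invoke that theorem. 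I also need that the tree-depth of $D$ is bounded by a function of $k$: each gadget $D_{uv}$ has bounded tree-depth and is attached only at the two vertices $u,v$, so replacing an edge/arc of $G$ by such a gadget increases tree-depth by at most a constant, hence $D$ has tree-depth $O(k)$ (this parallels the pathwidth bookkeeping done for the hardness reduction in Section \ref{sec3}). If \PBS{} reports a negative-weight properly balanced subgraph, I translate it back (via the equivalences in Lemma \ref{lem:changeGadgets} applied gadgetwise, together with Lemma \ref{lem:MCPPtoPBSCorrectness}) into a strictly cheaper \MCPP{} solution $H'$, set $H \leftarrow H'$, and repeat; otherwise $H$ is optimal and I output it.

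Finally I would account for the running time. Each call to \PBS{} takes FPT time $g(k)\cdot |D|^{O(1)}$ by Theorem \ref{thm:balancedSubgraphFPT}, where $|D|$ is polynomial in $m$ and $M' \le m^2+m$, hence polynomial in the size of the original instance. Since $w(H)$ is a nonnegative integer that strictly decreases with each successful improvement and starts at $M \le m^2+m$, there are at most $m^2+m$ iterations. The total running time is therefore $(m^2+m)\cdot g(k)\cdot \mathrm{poly}(m) = g(k)\cdot \mathrm{poly}(m)$, which is FPT in $k$; correctness follows because the loop terminates exactly when no improving \compMCPP{} move exists, which by Lemma \ref{lem:MCPPtoPBSCorrectness} means the current solution is of minimum weight. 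I expect the only delicate point to be the tree-depth bound on $D$: one must check carefully that each of the three gadget types introduces only a constant-depth attachment at $u$ and $v$ (in particular that the multiple parallel arcs within a gadget do not inflate the depth, which they do not, since parallel arcs between two vertices leave the underlying graph's tree-depth unchanged), so that tree-depth is preserved up to an additive constant and the parameter of the \PBS{} instance stays bounded in terms of $k$. $\qed$
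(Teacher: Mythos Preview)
Your proposal is correct and follows essentially the same approach as the paper: find an initial solution via Lemma~\ref{lem:naiveCPPBound}, iterate \compMCPP{} at most $M \le m^2+m$ times, and solve each \compMCPP{} call by reducing to \PBS{} via Lemma~\ref{lem:MCPPtoPBSCorrectness} and invoking Theorem~\ref{thm:balancedSubgraphFPT}. The one point you flag as delicate is actually simpler than you suggest: each gadget $D_{uv}$ consists only of parallel arcs between $u$ and $v$ (no new vertices are introduced), so the underlying undirected graph of $D$ is identical to that of $G$ and the tree-depth is preserved exactly, not merely up to an additive constant.
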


\section{Discussion}\label{sec6}

In this paper, we proved that {\sc MCPP} parameterized by pathwidth is W[1]-hard even if all edges and arcs of input graph $G$ have weight $1$.  This solves the second open question  of van Bevern {\em et al.} \cite{vanBevern} on parameterizations of {\sc MCPP}; the first being the parameterization by the number of arcs in $G$, which was proved to be FPT in \cite{GutinESA2014}.

We also showed that the unweighted version of {\sc MCPP} is FPT with respect to tree-depth.
This is the first problem we are aware of that has been shown to be W[1]-hard with respect to treewidth but FPT with respect to tree-depth.
The pathwidth of a graph lies between its treewidth and tree-depth.
The problem we studied is W[1]-hard with respect to both treewidth and pathwidth.
It is an open question whether there is a natural parameterized problem which is W[1]-hard with respect to treewidth but FPT with respect to pathwidth.

We call a vertex $v$ of $G$ {\em even} if the total number of arcs and edges incident to $v$ is even.
Another parameterization of {\sc MCPP} in \cite{vanBevern} is motivated by the fact that if each vertex of $G$ is even, then  {\sc MCPP} is polynomial-time solvable \cite{EdmondsMP1973}.  van Bevern {\em et al.} \cite{vanBevern} ask whether {\sc MCPP} parameterized by the number of non-even vertices is FPT.

\vspace{3mm}

\noindent{\bf Acknowledgement.} Research of GG was supported by Royal Society Wolfson Research Merit Award.

 \newpage
 
 \appendix
\section{Omitted proofs and figures from Section \ref{sec2}} \label{appendixA}



We begin by giving figures for the Choice and Checksum gadgets described in Section \ref{sec2} (a Duplication gadget is simply an even cycle with each even numbered arc being the input arc or an output arc).

 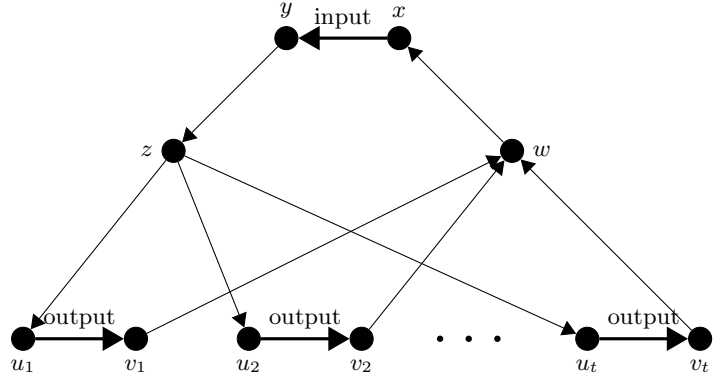
\begin{figure}[h]
  \centering
    \begin{tikzpicture}[>= triangle 60]

\node at (-1,4)(X)[city] [label=above: $x$]{} ;
\node at (-2.5,4)(Y)[city] [label=above: $y$] {}
edge[<-, very thick] node[auto, above]{input} (X)
;
\node at (-4,2.5)(Z)[city][label=left:$z$]{}
edge[<-] (Y);
\node at (0.5,2.5)(W)[city][label=right:$w$]{}
edge [->] (X);

\node at (-6,0)(U1)[city][label=below:$u_1$]{}
edge[<-](Z);
\node at (-4.5,0)(V1)[city][label=below:$v_1$]{}
edge[<-, very thick] node[auto, above] {output}(U1)
edge[->](W);
\node at (-3,0)(U2)[city][label=below:$u_2$]{}
edge[<-](Z);
\node at (-1.5,0)(V2)[city][label=below:$v_2$]{}
edge[->](W)
edge[<-, very thick] node[auto, above]{output}(U2);

\node at (0,0)(dots){\Huge $\dots$};

\node at (1.5,0)(Ut)[city][label=below:$u_t$]{}
edge[<-](Z);
\node at (3,0)(Vt)[city][label=below:$v_t$]{}
edge[->](W)
edge[<-, very thick] node[auto, above]{output}(Ut);

\end{tikzpicture}
\caption{A Choice gadget with $t$ output arcs. A balanced subgraph that contains the input arc will contain exactly one output arc.}\label{fig:choiceGadget}
\end{figure}
 
 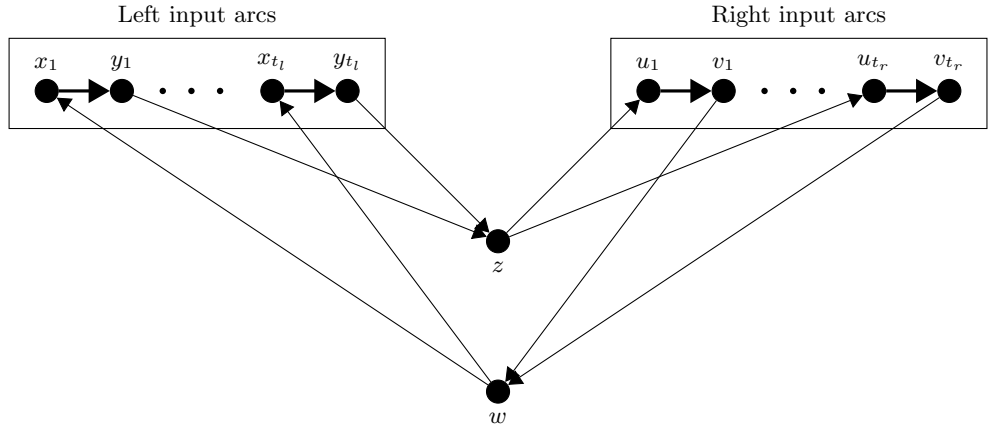
\begin{figure}[h]
  \centering
\begin{tikzpicture}[>= triangle 60]
\node at(0,0)(W)[city] [label=below: $w$]{};
\node at(0,2)(Z)[city] [label=below: $z$]{};

\node at (-6,4) (X1)[city] [label = above: $x_1$]{}
edge [<-] (W);
\node at (-5,4) (Y1)[city] [label = above: $y_1$]{}
edge [<-, very thick] (X1)
edge [->] (Z);

\node at (-4,4) (dots) {\Huge $\dots$};

\node at (-3,4) (Xtl)[city] [label = above: $x_{t_l}$]{}
edge [<-] (W);
\node at (-2,4) (Ytl)[city] [label = above: $y_{t_l}$]{}
edge [<-, very thick] (Xtl)
edge [->] (Z);

\draw (-6.5,3.5) rectangle (-1.5,4.7) [label=above: Left input arcs];
\node at (-4,5){Left input arcs};

\node at (2,4) (U1)[city] [label = above: $u_1$]{}
edge [<-] (Z);
\node at (3,4) (V1)[city] [label = above: $v_1$]{}
edge [<-, very thick] (U1)
edge [->] (W);

\node at (4,4) (dots) {\Huge $\dots$};

\node at (5,4) (Utr)[city] [label = above: $u_{t_r}$]{}
edge [<-] (Z);
\node at (6,4) (Vtr)[city] [label = above: $v_{t_r}$]{}
edge [<-, very thick] (Utr)
edge [->] (W);

\draw (1.5,3.5) rectangle (6.5,4.7) [label=above: Left input arcs];
\node at (4,5){Right input arcs};

\end{tikzpicture}
  \caption{A Checksum gadget with $t_l$ left input arcs and $t_r$ right input arcs.  A balanced subgraph will contain the same number of left and right input arcs.}\label{fig:checksumGadget}
\end{figure}

Recall Proposition \ref{prop:gadgets}:

\medskip
{\bf \noindent Proposition \ref{prop:gadgets}.}
{\em  Let $D$ be one of the gadgets described above, and let $D'$ be a properly balanced subgraph of $D$.
  Then if $D'$ does not contain any input arcs of $D$, it does not contain any other arcs of $D$. Otherwise:  if $D$ is a Duplication gadget, then $D'$ contains every output arc of $D$; if $D$ is a Choice gadget, then $D'$ contains exactly one output arc of $D$;
  and if $D$ is a Checksum gadget, then $D'$ contains the same number of left input arcs as right input arcs of $D$.
Furthermore, for any choice of input and output arcs satisfying these conditions, there is subgraph $D'$ containing exactly those input and output arcs.}
\medskip

We now give a proof of Lemma \ref{lem:joinCorrectness}.

\medskip
{\bf \noindent Lemma \ref{lem:joinCorrectness}.}
{\em
 Let $D_1$ and $D_2$ be disjoint directed multigraphs. Let $u_1v_1, \dots u_tv_t$ be arcs in $D_1$, and let $x_1y_1, \dots x_ty_t$ be arcs in $D_2$, such that $u_i$ and $v_i$ both have in-degree and out-degree $1$ in $D_1$, and $x_i$ and $y_i$ both have in-degree and out-degree $1$ in $D_2$, for each $i \in [t]$.
 Let $D$ be the graph formed by joining $u_iv_i$ and $x_iy_i$, for each $i \in [t]$. 
  Then a subgraph $D'$ of $D$ is a properly balanced graph if and only if:
  \add{(1) $|A(D') \cap \{u_iv_i, x_iy_i\}| \neq 1$ for each $i \in [t]$; and (2) $D'$ restricted to $D_1$ is a properly balanced subgraph of $D_1$, and $D'$ restricted to $D_2$ is a properly balanced subgraph of $D_2$.}
  }
\medskip
  
 \begin{proof} 
   Suppose first that $D'$ is a properly balanced subgraph of $D$. The first condition holds by definition, since $(u_iv_i, x_iy_i)$ is a double arc in $D$ for each $i \in [t]$. For the second condition, let $D_1'$  be $D'$ restricted to $D_1$. It is clear that $D_1'$ respects double and forbidden arcs in $D_1$. Furthermore, every vertex except $u_i$ or $v_i$, $i \in [t]$, has the same in- and out-degree in $D_1'$ as in $D'$, hence $D_1'$ is balanced for these vertices. Finally, for a vertex $u_i$ or $v_i$, $i \in [t]$, note that such a vertex has in- and out-degree 2 in $D$, and by the double arc $(u_iv_i, x_iy_i)$ either both these arcs are used in $D'$, or neither. In both cases, we find that the restriction $D_1'$ is balanced. Hence $D_1'$ is balanced at every vertex, and respects both double and forbidden arcs, i.e., $D_1'$ is a properly balanced subgraph of $D_1$. An analogous argument holds for $D_2$.

  Conversely, suppose that $|A(D') \cap \{u_iv_i, x_iy_i\}| \neq 1$ for each $i \in [t]$, $D'$ restricted to $D_1$ is a properly balanced subgraph of $D_1$, and $D'$ restricted to $D_2$ is a properly balanced subgraph of $D_2$. 
  Then by construction 
  $D'$ respects double arcs and forbidden arcs.
  As $D_1, D_2$ partition the arcs of $D$, and $D'$ is balanced when restricted to either of these graphs, we have that $D'$ is balanced. Thus, $D'$ is a properly balanced subgraph of $D$, as required.
  \qed
 \end{proof}

 The following technical lemma will also be useful in the proof of Theorem \ref{thm:mbsdaHardness}.

 \begin{lemma}\label{lem:joinPathwidth}
 Let $D', D_1, D_2, \dots, D_l$ be disjoint directed multigraphs, let $a_1, \dots, a_l$ be distinct arcs in $D'$, and let $D$ be a graph formed by joining the arc $a_i$ to an arc in $D_i$, for each $i \in [l]$.
 Then $\pw(D) \le \pw(D') + \max_i(\pw(D_i)) + 1$.
 \end{lemma}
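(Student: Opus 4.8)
The plan is to build a path decomposition of $D$ by interleaving a path decomposition of $D'$ with path decompositions of the $D_i$'s, using the fact that the arcs $a_1,\dots,a_l$ are pairwise distinct. First I would fix an optimal path decomposition $({\cal P}',\beta')$ of $D'$ of width $\pw(D')$, and for each $i\in[l]$ an optimal path decomposition $({\cal P}_i,\beta_i)$ of $D_i$ of width $\pw(D_i)$. Write $a_i=p_iq_i$ for the arc of $D'$ and $c_id_i$ for the arc of $D_i$ to which it is joined; recall that joining identifies $p_i$ with $c_i$ and $q_i$ with $d_i$, keeps both arcs, and adds them as a double arc. So in $D$ the vertex set is $V(D')\cup\bigcup_i (V(D_i)\setminus\{c_i,d_i\})$ together with the identifications, and every edge of $D$ lies either within $D'$ or within one (already-identified copy of a) $D_i$.

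Next I would insert, for each $i$, the decomposition $({\cal P}_i,\beta_i)$ into $({\cal P}',\beta')$ at a node. Since $a_i=p_iq_i$ is an edge of $D'$, there is a bag $B$ of $\beta'$ containing both $p_i$ and $q_i$; I subdivide the path of ${\cal P}'$ at that point, replace $B$ by a chain of bags $B\cup\beta_i(x)$ as $x$ ranges over ${\cal P}_i$ (having first renamed $c_i\mapsto p_i$, $d_i\mapsto q_i$ in every $\beta_i$-bag, so that the gluing vertices agree), and reconnect. Because we always carry the two extra vertices $p_i,q_i$ through this inserted segment, the connectivity condition for $p_i$ and $q_i$ is preserved, every edge of $D_i$ is covered inside the segment, and every edge of $D'$ is still covered (the old bag $B$ is contained in each bag of the segment). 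One must do this insertion for all $i$ at once: the distinctness of the $a_i$ guarantees we can pick, for each $i$, a \emph{distinct} position (e.g. process the $i$'s one at a time, each time inserting into a bag that still contains $a_i$ — such a bag persists because earlier insertions only enlarged bags), so the constructions do not interfere.

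Finally I would bound the width. Outside all inserted segments, the bags are just the original $\beta'$-bags, of size $\le\pw(D')+1$. Inside the segment for $D_i$, a bag has the form $\{p_i,q_i\}\cup\beta_i(x)$ minus the identified vertices already counted, so its size is at most $(\pw(D')+1)+$ something; being careful, each such bag is contained in $B\cup\beta_i(x)$ with $|B|\le\pw(D')+1$ and $|\beta_i(x)|\le\pw(D_i)+1$, but $p_i=c_i$ and $q_i=d_i$ are counted in both, so $|B\cup\beta_i(x)|\le (\pw(D')+1)+(\pw(D_i)+1)-2+2$; rearranging gives width $\le\pw(D')+\pw(D_i)\le\pw(D')+\max_j\pw(D_j)$, and one spare $+1$ absorbs any off-by-one in the accounting, yielding $\pw(D)\le\pw(D')+\max_i\pw(D_i)+1$ as claimed.

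The main obstacle I expect is purely bookkeeping: verifying that all $l$ insertions can be carried out simultaneously at distinct nodes (this is where distinctness of the $a_i$ is essential) and that after renaming the gluing vertices the three tree-decomposition axioms — vertex cover, edge cover, and connectivity of each $\beta^{-1}(v)$ — all still hold. None of this is deep, but it is the part that needs care; the width bound itself is a one-line count once the construction is in place.
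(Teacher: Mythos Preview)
Your approach is essentially the same as the paper's: take an optimal path decomposition of $D'$, locate a bag $B$ containing both endpoints of each $a_i$, duplicate that bag, and splice in the path decomposition of $D_i$ with every bag augmented by $B$; the width bound then follows from $|B\cup\beta_i(x)|\le(\pw(D')+1)+(\pw(D_i)+1)$. Your width arithmetic is a bit muddled (the ``$-2+2$'' and the hand-waved ``spare $+1$'' are not how the count actually goes---you simply cannot subtract $2$ since $c_i,d_i$ need not lie in every $\beta_i(x)$), but the construction is right and the final bound is correct.
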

 \begin{proof}
  Consider a minimum width path decomposition of $D'$. For each $i \in [l]$, let $x_i$ be the bag in the path decomposition of $D'$ that contains both vertices of $a_i$ (if there is a choice of bags, let $x_i$ be the bag of smallest size). Now replace $x_i$ with two identical bags $x_i',x_i''$, and in between $x_i'$ and $x_i''$ add a sequence of bags formed by taking a minimum width path decomposition of $D_i$ and adding all the vertices of $x_i$ to each bag.
  Do this for each $i \in [l]$. The resulting decomposition is a path decomposition of $D$. By construction and by choice of $x_i$, the width of this decomposition is  at most $\pw(D') + \max_i(\pw(D_i)) + 1$.
  \qed
 \end{proof}
 
 We are now ready to give a full proof of Theorem \ref{thm:mbsdaHardness}.

  
  \medskip
{\bf \noindent Theorem \ref{thm:mbsdaHardness}.}
{\em \PBS{} is W[1]-hard parameterized by pathwidth, 
  even when there are no forbidden arcs, there is a single arc $a^*$ of weight $-1$ and $a^*$ is not part of a double arc, and all other arcs have weight $0$.
}
\medskip
  
  \begin{proof}
  Given an instance $G = (V_1 \cup V_2 \dots \cup V_k, E)$ of {\sc $k$-Multicolored Clique}, 
  let $e_1, \dots e_m$ be an arbitrary enumeration of the edges of $E$. For each unordered pair $\{i,j\} \subseteq [k]$ with $i\neq j$, let $E_{\{i,j\}}$ be the subset of edges in $E$ with one vertex in $V_i$ and the other in $V_j$. Note that any $k$-clique in $G$ will have exactly one edge from $E_{\{i,j\}}$ for each choice of $i,j$.

  The structure of our \PBS{} instance will force us to choose a vertex $v_i$ from each class $V_i$, corresponding to the vertices of a $k$-clique.
  In addition, for each chosen vertex $v_i$ and each $j \neq i$, we choose an edge $e_{i\rightarrow j}$ between $v_i$ and $V_j$.
  A set of $O(k^2)$ Checksum gadgets will ensure that for each $i \neq j$, the chosen edges $e_{i \rightarrow j}$ and $e_{j \rightarrow i}$ must be the same.
  This ensures that $v_i$ and $v_j$ are adjacent for each $i \neq j$, and that therefore the vertices $v_1, \dots, v_k$ form a clique.

  We build our \PBS{} instance $(D,w,X)$ out of Duplication, Choice and Checksum gadgets, as follows.
  Let {\sc Start} be a Duplication gadget with input arc $a^*$, and $k$ output arcs. Label each output arc with a different integer $i$ from $[k]$.
    For each $i \in [k]$, let {\sc ChooseVertex}$(i)$ be a Choice gadget with $|V_i|$ output arcs. Label each output arc with a different vertex $v$ from $V_i$.
  Join the input arc of {\sc ChooseVertex}$(i)$ to the output arc of {\sc Start} with label $i$.
  
  For each $i \in [k], v \in V_i$, let
  {\sc AssignVertex}$(i,v)$ be a Duplication gadget with $k-1$ output arcs. 
  Label the output arcs with the integers from $[k]\setminus\{i\}$.
  Join the input arc of {\sc AssignVertex}$(i,v)$ to the output arc of {\sc ChooseVertex}$(i)$ with label $v$.

  For each $i \in [k], v \in V_i, j \in [k] \setminus \{i\}$ let {\sc ChooseEdge}$(i,v, \rightarrow j)$ be a Choice gadget with $|N(v) \cap V_j|$ output arcs. 
  Label each output arc with a different edge $e_r$ between $v$ and $V_j$.
  Join the input arc of {\sc ChooseEdge}$(i,v, \rightarrow j)$ to the output arc of {\sc AssignVertex}$(i,v)$ with label $j$.
  
  For each $i \in [k], v \in V_i, j \in [k] \setminus \{i\}$ and edge $e_r$ between $v_i$ and $V_j$, let {\sc AssignEdge}$(i,v,\rightarrow j, e_r)$ be a Duplication gadget with $r$ output arcs.
  Label this whole set of output arcs as {\sc Output}$(i,v, \rightarrow j, e_r)$.
  Join the input arc of {\sc AssignEdge}$(i,v,\rightarrow j, e_r)$ to the output arc of {\sc ChooseEdge}$(i,v, \rightarrow j)$ with label $e_r$.
  
  Finally, for each $i,j \in [k]$ with $i < j$, let {\sc CheckEdge}$(i,j)$ be a Checksum gadget with 
  $\sum\{r: e_r \in E_{\{i,j\}}\}$ left input arcs and $\sum\{r: e_r \in E_{\{i,j\}}\}$ right input arcs.
  Partition the left and right input arcs of {\sc CheckEdge}$(i,j)$ as follows.
  For each $e_r \in E_{\{i,j\}}$, let {\sc Input}$(i,v, \rightarrow j, e_r)$ be a set of $r$ left input arcs, where $v$ is the endpoint of $e_r$ in $V_i$.
  Similarly, let {\sc Input}$(j,u, \rightarrow i, e_r)$ be a set of $r$ right input arcs, where $u$ is the endpoint of $e_r$ in $V_j$.
  Now, join each set of arcs of the form {\sc Input}$(i,v, \rightarrow j, e_r)$ to the set of arcs of the form {\sc Output}$(i,v, \rightarrow j, e_r)$ from the gadget {\sc AssignEdge}$(i,v, \rightarrow j, e_r)$.

  Finally, we assign weights. 
  Let $a^*$ have weight $-1$ and let all other arcs have weight $0$.
  This concludes the construction of $D$.
  Observe that every output arc is joined to an input arc, and every input arc except $a^*$ is joined to an output arc.
  
\medskip
    {\bf Correctness.}

  We now show that $D$ has a properly balanced subgraph of negative weight if and only if $G$ has a clique with $k$ vertices.
   Observe that by repeated use of Lemma \ref{lem:joinCorrectness}, a subgraph $D'$ of $D$ is a properly balanced subgraph if and only if 
  
  \begin{itemize}
   \item  $D'$ restricted to any gadget {\sc Start}, {\sc ChooseVertex}$(i)$, {\sc AssignVertex}$(i,v)$, {\sc ChooseEdge}$(i,v, \rightarrow j)$, {\sc AssignEdge}$(i,v,\rightarrow j, e_r)$ or {\sc CheckEdge}$(i,j)$ is a properly balanced subgraph; and
   \item for each output arc $a$ that is joined to an input arc $a'$, $a$ is in $D'$ if and only if $a'$ is in $D'$.
  \end{itemize}

  First suppose $G$ has a clique on $k$ vertices. 
  By definition of $G$, this clique must have exactly one vertex from each class $V_i, i \in [k]$.
  For each $i \in [k]$, let $v_i$ be the vertex of $V_i$ that is in the clique.
  For each $i \neq j$, let $r(i,j)$ be the index such that $e_{r(i,j)}$ is the edge between $v_i$ and $v_j$.
  
  We will now describe a graph $D'$ by describing its restriction to each gadget. The construction will be such that an output arc is in $D'$ if and only if the input arc it is joined to is also in $D'$.
  Refer to a gadget as \emph{passive} if no arcs in it are contained in $D'$, and \emph{active} otherwise; further, for a Choice gadget, say that it \emph{selects} arc $i$ of the $i$'th output arc is contained in $D'$. Note by previous Propositions that these options all correspond to restrictions of balanced subgraphs to $D'$. 

  The graph $D'$ is constructed as follows. The {\sc Start} gadget is active; 
  for every $i \in [k]$, the {\sc ChooseVertex}$(i)$ gadget is active and selects the output arc labelled $v_i$,
  and the {\sc AssignVertex}$(i,v)$ gadget is active for $v=v_i$; 
  and for every $i \in [k]$ and every $j \in [k] \setminus \{i\}$, the gadget {\sc ChooseEdge}$(i,v, \rightarrow j)$ with $v=v_i$ is active, selecting the output arc labelled $e_{r(i,j)}$,
  and the {\sc AssignEdge}$(i,v,\rightarrow j, e)$ gadget is active for $v=v_i$ and $e=e_{r(i,j)}$. All other {\sc AssignVertex}, {\sc ChooseEdge} and {\sc AssignEdge} gadgets are passive.
  Note that $D'$ contains an arc set {\sc Output}$(i,v,\rightarrow j, e_r)$ if and only if $v=v_i$ and $r=r(i,j)$.

  Finally, for each $i,j \in [k]$ with $i < j$, let $D'$ restricted to {\sc CheckEdge}$(i,j)$ be a properly balanced subgraph containing the left input arcs from {\sc Input}$(i,v_i, \rightarrow j, e_{r(i,j)})$, the right input arcs from {\sc Input}$(j,v_j, \rightarrow i, e_{r(j,i)})$, and no other input arcs. As $r(i,j)=r(j,i)$, such a subgraph exists by Proposition \ref{prop:gadgets}.
  
  This concludes the construction of $D'$. As $D'$ restricted to each gadget is a properly balanced subgraph, and an output arc is in $D'$ if and only if the input arc it is joined to is in $D'$, we have that $D'$ is a properly balanced subgraph of $D$. 
  As $D'$ contains the arc $a^*$ of weight $-1$ and all other arcs have weight $0$, $D'$ is a properly balanced subgraph with negative weight, as required.

  Now for the converse, suppose that $D$ has a properly balanced subgraph $D'$ of negative weight.
  Then $D'$ must contain $a^*$, the input arc of {\sc Start} with weight $-1$.
  By Proposition \ref{prop:gadgets}, $D'$ must contain all of the output arcs of {\sc Start}.
  Thus for each $i \in [k]$, $D'$ contains the input arc of {\sc ChooseVertex}$(i)$.
  By Proposition \ref{prop:gadgets}, $D'$ contains exactly one output arc of  {\sc ChooseVertex}$(i)$;
  let $v_i \in V_i$ be the unique vertex in $G$ such that $D'$ restricted to  {\sc ChooseVertex}$(i)$ contains the output arc labelled $v_i$.
  
  It now follows that for each $i \in [k], v \in V_i$, $D'$ contains the input arc of  {\sc AssignVertex}$(i,v)$ if and only if $v = v_i$. 
  Then by Proposition \ref{prop:gadgets}, if $v = v_i$ then $D'$ contains the all the output arcs of  {\sc AssignVertex}$(i,v)$, and otherwise $D'$ contains none of the output arcs of  {\sc AssignVertex}$(i,v)$.

  It follows that for each  $i \in [k], v \in V_i, j \in [k] \setminus \{i\}$, $D'$ contains the input arc of {\sc ChooseEdge}$(i,v, \rightarrow j)$ if and only if $v = v_i$.
  If $v \neq v_i$ then by Proposition \ref{prop:gadgets} $D'$ contains none of the output arcs of  {\sc ChooseEdge}$(i,v, \rightarrow j)$.
  If $v = v_i$, then again by Proposition \ref{prop:gadgets} $D'$ contains exactly one output arc of  {\sc ChooseEdge}$(i,v, \rightarrow j)$.
  So for each $i \in [k], j \in [k] \setminus \{i\}$, let $r(i \rightarrow j)$ be the index such that $D'$ contains the output arc of  {\sc ChooseEdge}$(i,v_i, \rightarrow j)$ labelled with $e_{r(i \rightarrow j)}$.
  (Later we will show that $r(i\rightarrow j) = r(j \rightarrow i)$, implying that $v_i$ and $v_j$ are adjacent.)
  
  It now follows that for each $i \in [k], v \in V_i, j \in [k] \setminus \{i\}$ and edge $e_r$ between $v_i$ and $V_j$, $D'$ contains the input arc of {\sc AssignEdge}$(i,v,\rightarrow j, e_r)$ if and only $v = v_i$ and $r = r(i \rightarrow j)$.
  Furthermore by Proposition  \ref{prop:gadgets}, $D'$ contains the set of output arcs {\sc Output}$(i,v,\rightarrow j, e_r)$ if $v = v_i$ and $r = r(i \rightarrow j)$, and otherwise $D'$ contains none of the arcs from {\sc Output}$(i,v,\rightarrow j, e_r)$.
  
  We now have that for each $i,j \in [k]$ with $i < j$, the left input arcs of {\sc CheckEdge}$(i,j)$ in $D'$ are exactly those in  {\sc Input}$(i,v_i,\rightarrow j, e_{r(i \rightarrow j)})$, and the right input arcs of {\sc CheckEdge}$(i,j)$ in $D'$ are exactly those in  {\sc Input}$(j,v_j,\rightarrow i, e_{r(j \rightarrow i)})$.
  By Proposition  \ref{prop:gadgets}, we have that $|\textsc{ Input}(i,v_i,\rightarrow j, e_{r(i \rightarrow j)})| = |\textsc{ Input}(j,v_j,\rightarrow i, e_{r(j \rightarrow i)})|$ and so $r(i \rightarrow j) = r(j \rightarrow i)$.
  It follows that $e_{r(i \rightarrow j)}$ and $e_{r(j \rightarrow i)}$ are the same edge, and that therefore this is an edge in $G$ between $v_i$ and $v_j$.
  
  Thus we have that $v_1, \dots, v_k$ form a clique in $G$, as required.

\medskip
  {\bf Structure of the constructed graph and wrap-up of proof}.

  Having showed that $D$ represents the instance of {\sc $k$-Multicolored Clique}, it remains to show that $D$ satisfies the specified properties, that $\pw(D)$ is bounded by a function of $k$, and that it can be constructed in fixed-parameter time. 
%

 We now address the properties of $D$ in turn. 
    It is clear that there exists a single arc $a^*$ of weight $-1$, that $a^*$ is not part of a double arc, that all other arcs have weight $0$, and that there are no forbidden pairs. 
    To see that pathwidth is bounded,
    let $D^*$ be the graph derived from $D$ by deleting the vertices $w$ and $z$ from every {\sc CheckEdge} gadget, along with incident arcs. 
    (That is, $D^*$ is the graph we had before adding {\sc CheckEdge} gadgets in the construction of $D$.)
    We constructed $D^*$ by joining arcs in {\sc Start} to the input arcs of the  {\sc ChooseVertex}$(i)$ gadgets, then joining arcs of the resulting graph to the input arcs of the  {\sc AssignVertex}$(i,v)$ gadgets, then joining arcs of the resulting graph to the input arcs of the {\sc ChooseEdge}$(i,v, \rightarrow j)$ gadgets, then joining arcs of the resulting graph to the input arcs of the {\sc AssignEdge}$(i,v,\rightarrow j, e_r)$ gadgets.  Observe that a Duplication gadget can be turned into a path by the removal of one vertex, and Choice and Checksum gadgets can be turned into disjoint unions of paths by the removal of two vertices. Therefore Duplication gadgets have pathwidth $2$, and Choice and Checksum gadgets have pathwidth $3$. 
    Hence by repeated use of Lemma \ref{lem:joinPathwidth}, $D^*$ has pathwidth at most $((((2 + 3 + 1) + 2 + 1) + 3 + 1) + 2 + 1) = 16$.
    
    There are $\binom{k}{2} = \frac{k^2-k}{2}$ {\sc CheckEdge} gadgets, and therefore we can remove $k^2-k$ vertices from $D$ to get $D^*$.
  It follows that $D$ has pathwidth at most $k^2 - k +16$ (as we can add the $k^2-k$ extra vertices to every bag in a path decomposition of $D^*$).

  Finally, it is clear that the reduction can be performed in polynomial time, as we construct a polynomial number of gadgets and each gadget can be constructed in polynomial time. Thus we have provided a fixed-parameter time reduction from any instance of {\sc $k$-Multicolor Clique} to an instance of \PBS{} with the required properties and with pathwidth $O(k^2)$. This concludes the proof.
  
 \qed
 \end{proof}

  \section{Omitted proofs and figures from Section \ref{sec3}}
 

 
 We first describe the gadgets of Section \ref{sec3}, and their solutions, in more detail.
 Let ($D = (V,A), w, X=\{(a_i,a_i'): i \in [t]\}, Y=\emptyset$) be an instance of \PBS{} with double arcs $X$ and no forbidden pairs, and where $w(a^*)=-1$ for a single arc $a^*$ and $w(a)=0$ for every other arc.
  For each pair of vertices $u,v$ such that there is an arc from $u$ to $v$ in $D$, we produce a gadget {\sc Gadget($u,v$)} that contains $u,v$ and new vertices appearing in {\sc Gadget($u,v$)}
 Recall that we may assume that between any pair of vertices, there is either an arc of weight $-1$, an arc of weight $0$, or a double arc with both arcs having weight $0$.

 Given a directed multigraph $H$ (corresponding to part of a solution to an \MCPP{} instance) and a vertex $v$, the \emph{imbalance} of $v$ is $d_H^+(v) - d_H^-(v)$.


  \begin{figure}[h]
  \centering
  \begin{minipage}[b]{0.24\linewidth}
    \centering
    \begin{tikzpicture}[>= triangle 60,scale=0.7]
      \node at (0,0) (U)[city] [label = above: $u$]{}; \node at (3,0)
      (V)[city] [label = above: $v$]{} edge[<-] node[above] {$0$}(U);
      \node at (0,-2.4){};
    \end{tikzpicture}
    \subcaption{Arc $uv$ of weight $0$}
    \label{subfig:zeroArcOriginal}
  \end{minipage}%
  \begin{minipage}[b]{0.24\linewidth}
    \centering
    \begin{tikzpicture}[>= triangle 60,scale=0.7]
      \node at (0,0) (U)[city] [label = above right: $u$]{}; \node at
      (2,0) (Zuv)[city] [label = above right:$z_{uv}$ ]{} edge[->](U);
      \node at (4,0) (V)[city] [label = above right: $v$]{}
      edge[<-](Zuv); \draw[-latex, ->, dashed] (U)
      to[out=90,in=100,looseness=3] (Zuv); \draw[-latex, ->, dashed] (U)
      to[out=270,in=260,looseness=3] (Zuv); \draw[-latex, ->, dashed]
      (V) to[out=90,in=80,looseness=3] (Zuv); \node at (0,-2.4){};
    \end{tikzpicture}
    \subcaption{{\sc Gadget($u,v$)}}
    \label{subfig:zeroArcGadget}
  \end{minipage}%
  \begin{minipage}[b]{0.24\linewidth}
    \centering
    \begin{tikzpicture}[>= triangle 60,scale=0.7]
      \node at (0,0) (U)[city] [label = right: $u$]{};
      \node at (2,0) (Zuv)[city] [label = above right:$z_{uv}$ ]{}
      edge[->, bend left = 50](U)
      edge[->, bend right = 50](U);
      \node at (4,0) (V)[city] [label = above right: $v$]{}
      edge[<-](Zuv);
      \draw[-latex, ->, dashed] (U) to[out=90,in=100,looseness=3] (Zuv);
      \draw[-latex, ->, dashed] (U) to[out=270,in=260,looseness=3] (Zuv);
      \draw[-latex, ->, dashed] (V) to[out=90,in=80,looseness=3] (Zuv);
      \node at (0,-2.4){};
    \end{tikzpicture}
    \subcaption{Passive solution}
    \label{subfig:zeroArcPassive}
  \end{minipage}%
  \begin{minipage}[b]{0.24\linewidth}
    \centering
    \begin{tikzpicture}[>= triangle 60,scale=0.7]
      \node at (0,0) (U)[city] [label = above right: $u$]{};
      \node at (2,0) (Zuv)[city] [label = above right:$z_{uv}$ ]{}
      edge[->](U);
      \node at (4,0) (V)[city] [label = above right: $v$]{}
      edge[<-, bend left = 50](Zuv)
      edge[<-, bend right = 50](Zuv);
      \draw[-latex, ->, dashed] (U) to[out=90,in=100,looseness=3] (Zuv);
      \draw[-latex, ->, dashed] (U) to[out=270,in=260,looseness=3] (Zuv);
      \draw[-latex, ->, dashed] (V) to[out=90,in=80,looseness=3] (Zuv);
      \node at (0,-2.4){};
    \end{tikzpicture}
    \subcaption{Active solution}
    \label{subfig:zeroArcActive}
  \end{minipage}    
  \caption{The gadget {\sc Gadget($u,v$)} when $uv$ is an arc of weight $0$ that is not part of a double arc. Dashed lines represent heavy arcs.}\label{fig:zeroArcGadget}
\end{figure}
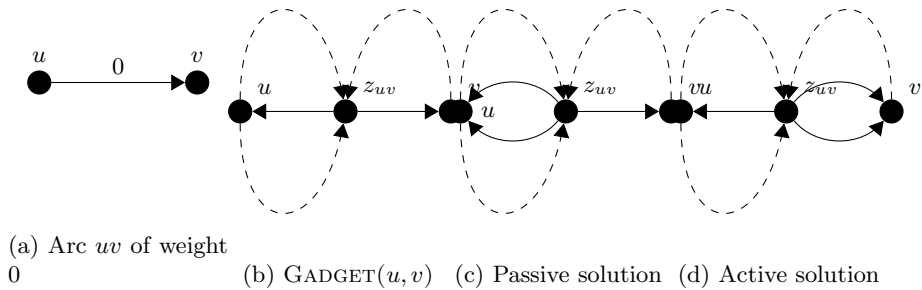

 \medskip
 {\bf For an arc $uv$ of weight $0$ that is not part of a double arc:} Construct {\sc Gadget($u,v$)} by creating
%
%
%
a new vertex $z_{uv}$, with standard arcs $z_{uv}u$ and $z_{uv}v$, two heavy arcs $uz_{uv}$, and a heavy arc $vz_{uv}$. 
(See Figure \ref{fig:zeroArcGadget}.)

 For any solution in which each heavy arc is traversed exactly once, $z_{uv}$ has in-degree exactly $3$. The remaining arcs are the two out-arcs of $z_{uv}$, which must be traversed exactly three times between them. There are therefore two possible choices:
 
 \begin{itemize}
  \item {\bf Passive Solution:} Traverse $z_{uv}u$ twice and $z_{uv}v$ once. In this solution, every vertex is balanced within {\sc Gadget($u,v$)} and the cost is $3M+3$. 
  \item {\bf Active Solution:} Traverse $z_{uv}u$ once and $z_{uv}v$ twice. In this solution, every vertex is balanced except for $u$, which has imbalance $1$, and $v$, which has imbalance $-1$. The cost of this solution is also $3M+3$. 
 \end{itemize}

 Observe that the difference between the weight of the passive and active solutions is $0$, and the imbalance at $u$ and $v$ for the active solution is the same as in an arc from $u$ to $v$.

 The total weight of {\sc Gadget($u,v$)} is $3M+2$. 

 \begin{figure}[h]
  \centering
  \begin{minipage}[t]{0.48\linewidth}
    \centering
    \begin{tikzpicture}[>= triangle 60]
      \node at (0,0) (U)[city] [label = above: $u$]{};
      \node at (2,0) (V)[city] [label = above: $v$]{}
      edge[<-] node[above] {$-1$}(U);
      \node at (0,-2.4){};
    \end{tikzpicture}
    \subcaption{Arc $uv$ of weight $-1$}
    \label{subfig:negativeArcOriginal}
  \end{minipage}%
  \begin{minipage}[t]{0.48\linewidth}
    \centering
    \begin{tikzpicture}[>= triangle 60]
      \node at (0,0) (U)[city] [label = above right: $u$]{};
      \node at (2,0) (Zuv)[city] [label = above right:$z_{uv}$ ]{}
      edge[->](U);
      \node at (4,0) (Wuv)[city] [label = above right:$w_{uv}$ ]{}
      edge[<-](Zuv);
      \node at (6,0) (V)[city] [label = above right: $v$]{}
      edge[->](Wuv);
      \draw[-latex, ->, dashed] (U) to[out=90,in=100,looseness=3] (Zuv);
      \draw[-latex, ->, dashed] (U) to[out=270,in=260,looseness=3] (Zuv);
      \draw[-latex, ->, dashed] (Wuv) to[out=90,in=80,looseness=3] (Zuv);
      \draw[-latex, ->, dashed] (Wuv) to[out=90,in=100,looseness=3] (V);
      \draw[-latex, ->, dashed] (Wuv) to[out=270,in=260,looseness=3] (V);
      \node at (0,-2.4){};
    \end{tikzpicture}
    \subcaption{{\sc Gadget($u,v$)}}
    \label{subfig:negativeArcGadget}
  \end{minipage}
  ~
  \begin{minipage}[t]{0.48\linewidth}
    \centering
    \begin{tikzpicture}[>= triangle 60]
      \node at (0,0) (U)[city] [label = right: $u$]{};
      \node at (1.5,0) (Zuv)[city] [label = above right:$z_{uv}$ ]{}
      edge[->, bend left = 50](U)
      edge[->, bend right=50](U);
      \node at (3,0) (Wuv)[city] [label = right:$w_{uv}$ ]{}
      edge[<-](Zuv);
      \node at (4.5,0) (V)[city] [label = above right: $v$]{}
      edge[->, bend right=50](Wuv)
      edge[->, bend left=50](Wuv);
      \draw[-latex, ->, dashed] (U) to[out=90,in=100,looseness=3] (Zuv);
      \draw[-latex, ->, dashed] (U) to[out=270,in=260,looseness=3] (Zuv);
      \draw[-latex, ->, dashed] (Wuv) to[out=90,in=80,looseness=3] (Zuv);
      \draw[-latex, ->, dashed] (Wuv) to[out=90,in=100,looseness=3] (V);
      \draw[-latex, ->, dashed] (Wuv) to[out=270,in=260,looseness=3] (V);
      \node at (0,-2.4){};
    \end{tikzpicture}
    \subcaption{Passive solution}
    \label{subfig:negativeArcPassive}
  \end{minipage}%
  \begin{minipage}[t]{0.48\linewidth} 
    \centering
    \begin{tikzpicture}[>= triangle 60]
      \node at (0,0) (U)[city] [label = above right: $u$]{};
      \node at (1.5,0) (Zuv)[city] [label = right:$z_{uv}$ ]{}
      edge[->](U);
      \node at (3,0) (Wuv)[city] [label = above right:$w_{uv}$ ]{}
      edge[<-, bend left=50](Zuv)
      edge[<-, bend right=50](Zuv);
      \node at (4.5,0) (V)[city] [label = above right: $v$]{}
      edge[->](Wuv);
      \draw[-latex, ->, dashed] (U) to[out=90,in=100,looseness=3] (Zuv);
      \draw[-latex, ->, dashed] (U) to[out=270,in=260,looseness=3] (Zuv);
      \draw[-latex, ->, dashed] (Wuv) to[out=90,in=80,looseness=3] (Zuv);
      \draw[-latex, ->, dashed] (Wuv) to[out=90,in=100,looseness=3] (V);
      \draw[-latex, ->, dashed] (Wuv) to[out=270,in=260,looseness=3] (V);
      \node at (0,-2.4){};
    \end{tikzpicture}
    \subcaption{Active solution}
    \label{subfig:negativeArcActive}
  \end{minipage}
  \caption{The gadget {\sc Gadget($u,v$)} when $uv$ is an arc of weight $-1$ that is not part of a double arc. Dashed lines represent heavy arcs.}\label{fig:negativeArcGadget}
\end{figure}
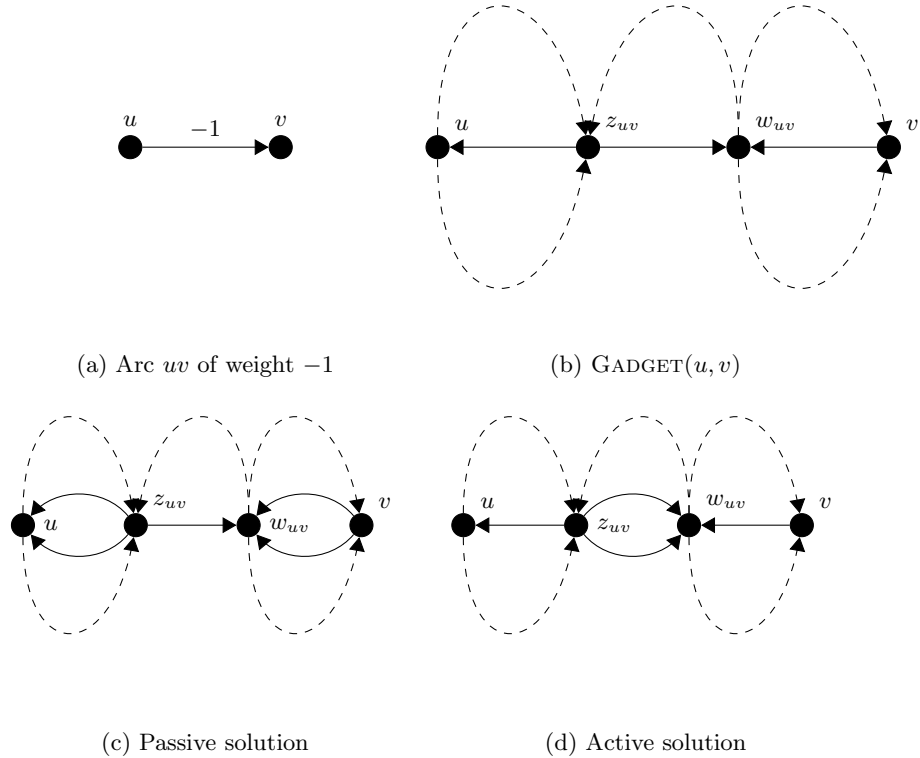

\medskip

 {\bf For an arc $uv$ of weight $-1$ that is not part of a double arc:}
%
  Construct {\sc Gadget($u,v$)} by adding
  two new vertices $w_{uv}$ and $z_{uv}$, with standard arcs $z_{uv}u, z_{uv}w_{uv}$ and $vw_{uv}$, two heavy arcs $uz_{uv}$, one heavy arc $w_{uv}z_{uv}$, and two heavy arcs $w_{uv}v$. 
(See Figure \ref{fig:negativeArcGadget}.)
 
 For any solution in which each heavy arc is traversed exactly once,  $z_{uv}$ will have exactly $3$ in-arcs, and $w_{uv}$ will have exactly $3$ out-arcs. It remains to decide how many times to use the arcs out of $z_{uv}$ and into $w_{uv}$. 
 There are two possible solutions:
 
  \begin{itemize}
  \item {\bf Passive Solution:} Traverse $z_{uv}u$ twice, $z_{uv}w_{uv}$ once and $vw_{uv}$ twice. In this solution, every vertex is balanced within {\sc Gadget($u,v$)} and the cost is $5M+5$. 
  \item  {\bf Active Solution:} Traverse $z_{uv}$ once, $z_{uv}w_{uv}$ twice and $vw_{uv}$ once. In this solution, every vertex is balanced except for $u$, which has imbalance $1$, and $v$, which has imbalance $-1$. The cost of this solution is $5M+4$. 
 \end{itemize}

 Observe that the active solution costs $1$ less than the passive solution, and the imbalance at $u$ and $v$ for the active solution is again the same as in an arc from $u$ to $v$.
 
 The weight of this gadget is $5M+3$.
 
 \medskip
 
  \begin{figure}[h]
  \centering
  \begin{minipage}[t]{0.27\linewidth}
    \centering
    \begin{tikzpicture}[>= triangle 60]
      \node at (0,0) (U)[city] [label = above: $u$]{};
      \node at (2,0) (V)[city] [label = above: $v$]{}
      edge[<-, bend left = 50] node[below] {$0$}(U)
      edge[<-, bend right = 50] node[above] {$0$}(U);
    \end{tikzpicture}
    \subcaption{A double arc from $u$ to $v$}
    \label{subfig:doubleArcOriginal}
  \end{minipage}%
  \begin{minipage}[t]{0.23\linewidth}
    \centering
    \begin{tikzpicture}[>= triangle 60]
      \node at (0,0) (U)[city] [label = above: $u$]{};
      \node at (2,0) (V)[city] [label = above: $v$]{}
      edge[dotted, bend right = 50](U)
      edge[<-, dashed, bend left = 50] (U);
    \end{tikzpicture}
    \subcaption{{\sc Gadget($u,v$)}}
    \label{subfig:doubleArcGadget}
  \end{minipage}%
  \begin{minipage}[t]{0.23\linewidth}
    \centering
    \begin{tikzpicture}[>= triangle 60]
      \node at (0,0) (U)[city] [label = above: $u$]{};
      \node at (2,0) (V)[city] [label = above: $v$]{}
      edge[->, dashed, bend right = 50] (U)
      edge[<-, dashed, bend left = 50] (U);
    \end{tikzpicture}
    \subcaption{Passive solution}
    \label{subfig:doubleArcPassive}
  \end{minipage}%
  \begin{minipage}[t]{0.23\linewidth}
    \centering  
    \begin{tikzpicture}[>= triangle 60]
      \node at (0,0) (U)[city] [label = above: $u$]{};
      \node at (2,0) (V)[city] [label = above: $v$]{}
      edge[<-, dashed, bend right = 50] (U)
      edge[<-, dashed, bend left = 50] (U);
    \end{tikzpicture}
    \subcaption{Active solution}
    \label{subfig:doubleArcActive}
  \end{minipage}
  \caption{The gadget {\sc Gadget($u,v$)} when there is a double arc from $u$ to $v$. Dashed lines represent heavy arcs; the dotted line represents a heavy edge.}\label{fig:doubleArcGadget}
\end{figure}
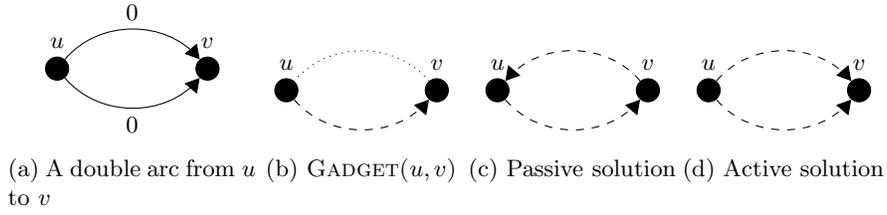
 
 {\bf For a double arc from $u$ to $v$:}
 {\sc Gadget($u,v$)} consists of a heavy arc $uv$ and a heavy edge $\{u,v\}$. 
(See Figure \ref{fig:doubleArcGadget}.)
 Assuming a solution traverses each heavy arc/edge exactly once, the only thing to decide is in which direction to traverse the undirected edge.
 Thus there are two possible solutions:
 
   \begin{itemize}
  \item {\bf Passive Solution:} Traverse the edge from $v$ to $u$. In this solution, every vertex is balanced within {\sc Gadget($u,v$)} and the cost is $2M$. 
  \item {\bf Active Solution:} Traverse the edge from $u$ to $v$. In this solution, every vertex is balanced except for $u$, which has imbalance $2$, and $v$, which has imbalance $-2$. The cost of this solution is also $2M$. 
 \end{itemize}

 Observe that the difference between the weight of the passive and active solutions is $0$, and the imbalance at $u$ and $v$ for the active solution is the same as in a double arc from $u$ to $v$.

 The weight of this gadget is $2M$.
 
 
\medskip

{\bf Removing heavy arcs and edges.} Finally, we replace every heavy arc (edge, respectively) we just created, of weight $M$, by a directed (undirected) path of length $M$, where all internal vertices have degree two. 
Note that in any minimal solution, each arc or edge in such a path will be traversed the same number of times
(if one edge in the path is traversed more times than its neighbor, then it must be traversed at least twice more, including at least once in each direction, and so the solution is not minimal).
Thus in the analysis, we may treat such a path as effectively being a single edge or arc of weight $M$.
 
 \medskip
 
 In the proof of Theorem \ref{thm:mcppTreewidthHardness} we will us the following technical lemma, whose proof is identical to that of Lemma \ref{lem:joinPathwidth} (see Section \ref{appendixA}).

 \begin{lemma}\label{lem:gadgetPathwidth}
  Let $H$ be a mixed multigraph, and $G$ the mixed multigraph derived by replacing each arc or double arc from $u$ to $v$ with a gadget $G_{uv}$. 
  Then $\pw(G) \le \pw(H) + \max_{uv} \pw(G_{uv})$.
 \end{lemma}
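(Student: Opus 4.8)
The plan is to follow the proof of Lemma~\ref{lem:joinPathwidth} essentially verbatim, viewing the replacement of an arc (or double arc) $uv$ of $H$ by the gadget $G_{uv}$ as an ``insertion'' operation on a path decomposition. I would start from a path decomposition $(\mathcal{P},\beta)$ of the underlying undirected graph of $H$ of width $\pw(H)$, and for every arc or double arc $uv$ of $H$ pick a bag $b_{uv}$ of $\mathcal{P}$ containing both $u$ and $v$ (such a bag exists since $uv$ is an edge of the underlying graph; if there is a choice take one of minimum size, and if several arcs/double arcs want the same bag, first duplicate that bag enough times so that each gets its own copy). I would then replace $b_{uv}$ by two adjacent copies $b'_{uv},b''_{uv}$ of itself and splice, between them, the sequence of bags obtained from a minimum-width path decomposition of $G_{uv}$ by adding all vertices of $b_{uv}$ to every bag. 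Doing this for each $uv$ independently — the newly introduced vertices of distinct gadgets are pairwise disjoint, so the operations do not interfere — yields a path decomposition of the underlying graph of $G$.

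Next I would verify the three defining properties. \emph{Edge coverage:} edges inside a gadget $G_{uv}$ are covered by the spliced bags, since those bags contain a genuine path decomposition of $G_{uv}$; every edge of $G$ not lying inside a gadget is already covered in $(\mathcal{P},\beta)$ and survives. \emph{Vertex coverage} is immediate. \emph{Connectivity:} a vertex of $H$ lying in $b_{uv}$ also lies in every spliced bag and in $b'_{uv},b''_{uv}$, so its occurrence set remains an interval; the internal vertices of $G_{uv}$ occupy a subinterval of the spliced block; and a vertex of $H$ not in $b_{uv}$ is untouched. For the \emph{width}: each surviving original bag has size at most $\pw(H)+1$, and each spliced bag has size at most $|b_{uv}|+(\pw(G_{uv})+1)$; since $u,v\in b_{uv}$ and $u,v$ also appear in the $G_{uv}$-decomposition one arranges — using the minimum-size choice of $b_{uv}$, and if necessary first padding the $G_{uv}$-decomposition so that each of its bags meets $\{u,v\}$ — that this is at most $\pw(H)+\max_{uv}\pw(G_{uv})+O(1)$, and for the concrete constant-size gadgets occurring here (the weight-$0$ arc gadget, the weight-$(-1)$ arc gadget and the double-arc gadget, each with at most two internal vertices) one checks directly that the block for each gadget type adds at most $\max_{uv}\pw(G_{uv})$ to the width, giving $\pw(G)\le\pw(H)+\max_{uv}\pw(G_{uv})$.

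The main obstacle — and it is mild — is the bookkeeping in the width estimate: a path decomposition of $G_{uv}$ need not have $u$ and $v$ together in a single bag (for instance the weight-$(-1)$ gadget is an induced path $u\,z_{uv}\,w_{uv}\,v$), so naively adding all of $b_{uv}$ to every spliced bag could overcount the overlap. This is handled exactly as in Lemma~\ref{lem:joinPathwidth}, namely by the minimum-size choice of $b_{uv}$ together with a finite case check over the (boundedly many, bounded-size) gadget types, each of which admits a small path decomposition whose spliced block raises the width by at most $\max_{uv}\pw(G_{uv})$. A secondary, purely technical point — that distinct arcs or double arcs between the same pair $u,v$ (and hence distinct gadgets) may compete for the same bag — is dispatched by duplicating that bag before performing the insertions.
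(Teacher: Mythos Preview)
Your approach is essentially identical to the paper's: the paper simply states that the proof is the same as that of Lemma~\ref{lem:joinPathwidth}, i.e., splice a minimum-width path decomposition of each gadget into a minimum-width path decomposition of $H$ at a bag containing both endpoints. You are in fact more careful than the paper on the width bookkeeping: the verbatim argument from Lemma~\ref{lem:joinPathwidth} yields $\pw(H)+\max_{uv}\pw(G_{uv})+1$, and you correctly observe that the extra $+1$ is absorbed only by a direct check on the specific (constant-size) gadgets used, whereas the paper glosses over this point.
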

 

We now give the full proof of Theorem \ref{thm:mcppTreewidthHardness}.

\medskip
 {\bf \noindent Theorem \ref{thm:mcppTreewidthHardness}.}
{\em \MCPP{} is W[1]-hard parameterized by pathwidth.}
 
 \begin{proof}
 Let ($D = (V,A), w, X=\{(a_i,a_i'): i \in [t]\}, Y=\emptyset$) be an instance of \PBS{} with double arcs $X$ and no forbidden pairs, and where $w(a^*)=-1$ for a single arc $a^*$ and $w(a)=0$ for every other arc.
 Let $G$ be the graph in which every arc / double arc in $D$ is replaced with its corresponding gadget.
 
  Let $m_1$ be the number of arcs of weight 0 not in a double arc in $D$.
  By Theorem \ref{thm:mbsdaHardness}, we may assume there is only one arc of weight $-1$ not in a double arc in $D$.
 Let $m_2$ be the number of double arcs in $D$.
 
  If we use the passive solution for every gadget, then every vertex is balanced, every arc and edge is covered, and the total cost is 
 $m_1(3M+3) + 5M + 5 + m_2(2M) = (3m_1 + 2m_2 + 5)M + 3m_1 + 5$. Therefore this is an upper bound on the weight of a optimal solution.
 
 The total weight of the graph is $m_1(3M+2) + 5M + 3 + m_2(2M) = (3m_1 + 2m_2 + 5)M + 2m_1 + 3$. Therefore, any minimal solution that does not traverse each heavy arc/edge exactly once will have weight at least $(3m_1 + 2m_2 + 6)M + 2m_1 + 3$. This is $M - m_1 - 2$ greater than the solution in which we use the passive solution for every gadget. So by setting $M$ to be $m_1 + 3$, we may assume that the optimal solution traverses each heavy arc/edge exactly once. Therefore we may assume that we use either the active or passive solution for each gadget.
 
 
 Let $W = (3m_1 + 2m_2 + 5)M + 3m_1 + 5$, the cost of using the passive solution for each gadget.
 We now show that $G$ has a solution of weight less than $W$ if and only if $D$ has a solution of negative weight.

 Suppose first that $G$ has a solution of weight less than $W$. 
 As discussed above, we may assume that every gadget is either given the active or passive solution.
 Let $u^*,v^*$ be the vertices such that $u^*v^*$ is the only arc in $D$ of weight $-1$.
 Then {\sc Gadget($u^*,v^*$)} is the only gadget for which one solution weighs less than the other. 
 Therefore we may assume  {\sc Gadget($u^*,v^*$)} is given the active solution.
 
 
 Let $D'$ be the subgraph of $D$ whose arc set is the set of all arcs whose corresponding gadget in $G$ has the active solution (with both arcs in a double arc included if their gadget has the active solution, and neither if it has the passive solution).
 Then $D'$ contains $u^*v^*$ and so $D$ has negative weight. By construction, 
 $D'$ respects double arcs, and as there are no forbidden pairs $D'$ trivially respects forbidden pairs.
 It remains to show that $D'$ is balanced.

In our \MCPP{} solution, the imbalance of a vertex $v$ is equal to the sum of its imbalance in the active gadgets containing it. The imbalance of $v$ in an active gadget is $+1$ if the gadget corresponds to a single out-arc of $v$, $-1$ if the gadget corresponds to a single in-arc of $v$, $+2$ if the gadget corresponds to a double arc starting at $v$, and $-2$ if the gadget corresponds to a double arc ending at $v$. It follows that the imbalance of $v$ in our \MCPP{} solution is equal to its imbalance in $D'$,
Then as our \MCPP{} solution is balanced, $D'$ is also balanced, as required.
 
 Suppose on the other hand that $D$ has a solution $D'$ of negative weight. Construct a solution to \MCPP{} on $G$ by assigning each gadget the active assignment if the corresponding arc/double-arc appears in $D'$, and the passive assignment otherwise. As $D$ has negative weight, it must use $u^*v^*$ and so {\sc Gadget($u^*,v^*$)} gets the active solution. It follows that the cost of this solution is $W-1$. It is clear that every arc and edge is traversed at least once. As before, the imbalance of each vertex in this solution is equal to its imbalance in $D'$. Therefore this solution is balanced, and so corresponds to a closed walk, as required.

%
%
%
%
%
 
 
 We now show that $G$ has pathwidth bounded by a function of $\pw(D)$, the pathwidth of $D$. 

 

Observe that for each gadget {\sc Gadget($u,v$)} in our construction of $G$, {\sc Gadget($u,v$)} can be turned into a disjoint union of paths by the removal of at most $4$ vertices, and therefore {\sc Gadget($u,v$)} has pathwidth at most $5$.
Furthermore, $G$ can be derived from $D$ by replacing each arc or double arc with a corresponding gadgets. It follows from Lemma \ref{lem:gadgetPathwidth} that $G$ has pathwidth at most $\pw(D)+5$.


Let $m$ be the number of arcs in $D$. Then $G$ is derived from $D$ by introducing at most $m$ gadgets, and each gadget has at most $5M \le 5m + 15$ arcs. Therefore $G$ can be constructed in $O(m^2)$ time.

We now have that, given an instance $(D,w,X, Y)$  of \PBS{} of the type specified in Theorem \ref{thm:mbsdaHardness}, we can in polynomial time create an instance $G$ of \MCPP{} with pathwidth bounded by $\max(\pw(D),4)$. We therefore have a parameterized reduction from this restriction of \PBS, parameterized by pathwidth, to \MCPP{} parameterized by pathwidth. 
As this restriction of \PBS{} is W[1]-hard by Theorem \ref{thm:mbsdaHardness}, this completes the proof.
\qed
\end{proof}


%

\section{Omitted proofs from Section \ref{sec5}}

{\bf \noindent Lemma \ref{lem:naiveCPPBound}.}
{\em  Given an instance $(G,w)$ of \MCPP{} with $m$ arcs and edges, we can, in polynomial time, find closed walk of of $G$ that traverses each edge and arc at least once, if such a walk exists, and this walk traverses each arc at most $m+1$ times. 
}

  \begin{proof}
   For each arc or edge $a$ in $G$, we will construct a closed walk ${\textsc walk}_a$ that contains $a$.
   If $a$ is an edge, we may let ${\textsc walk}_a$ be a walk that traverses $a$ once in each direction. Then this is a closed walk that traverses $a$ twice and traverses no other arcs or edges.
   If $a$ is an arc,
   let $u$ be the start vertex and $v$ the end vertex of $a$. In polynomial time, find a simple walk from $v$ to $u$. (If no such walk exists, then $(G,w)$ has no solution). Adding $a=uv$ to this walk, we have a closed walk that traverse $a$ once and each other edge or arc at most once.
   Taking the union of walks ${\textsc walk}_a$ for every edge and arc $a$, we get a walk that traverses each arc or edge $a$ at most $m+1$ times (at most twice by ${\textsc walk}_a$, and at most once for all other walks).
   \qed
  \end{proof}

  \medskip
{\bf \noindent Lemma \ref{lem:changeGadgets}.}
{\em  Let $uv$ be an edge or arc in $G$, and let $B$ and $W$ be arbitrary integers such that $w(H_{uv})+W \leq M$. 
  Then the following are equivalent.
\begin{enumerate}
 \item There exists a graph $H'_{uv}$ with vertex set $\{u,v\}$ that covers $G_{uv}$, 
 such that  $w(H'_{uv}) = w(H_{uv})+W$ and $b_{H'_{uv}}(u) = b_{H_{uv}}(u) + B$;
 \item $D_{uv}$ has a subgraph $D'_{uv}$ which respects double arcs and forbidden pairs, such that $w(D'_{uv}) = W$ and $b_{D'_{uv}}(u) = B$.
\end{enumerate}
}

\begin{proof}
We consider each of the cases separately.

\medskip 
{\bf If $G_{uv}$ is an arc from $u$ to $v$ and $H_{uv}$ traverses $uv$  $t \le M$ times:} Then recall $D_{uv}$ has $t-1$ arcs from $v$ to $u$ of weight $-1$, and $M-t$ arcs from $u$ to $v$ of weight $1$.
Observe that  $t = w(H_{uv})$.

(\ref{condH}) $\rightarrow$ (\ref{condD}): Observe that 
$B = W$. As $H'_{uv}$ covers $G_{uv}$, we have $1 \le w(H_{uv}) + W \le M$, and therefore $1-t \le W \le M-t$.
If $W = 0$, let $D'_{uv}$ have no arcs. If $W > 0$, let $D'_{uv}$ have $W$ positive weight arcs from $u$ to $v$. If $W < 0$, let $D'_{uv}$ have $|W|$ negative weight arcs from $v$ to $u$.
Observe that in each case $D'_{uv}$ is a subgraph of $D_{uv}$ and satisfies condition \ref{condD}.

 (\ref{condD}) $\rightarrow$ (\ref{condH}): Observe that by construction of $D_{uv}$, $W=B$, and also observe that $1-t \le W$. If $W = 0$, then let $H'_{uv} = H_{uv}$ and observe that $H'_{uv}$ satisfies condition \ref{condH}.
 If $W > 0$, let $H'_{uv}$ be $H_{uv}$ with the addition of $W$ extra arcs from $u$ to $v$. Then $w(H'_{uv}) = w(H_{uv}) + W$, and $H'_{uv}$ satisfies condition \ref{condH}.
 If $W < 0$, let $H'_{uv}$ be $H_{uv}$ with $|W|$ arcs from $u$ to $v$ removed. As $|W| < t$, $H'_{uv}$ still has at least one arc from $u$ to $v$ and so covers $G_{uv}$, and condition \ref{condH} is satisfied.
 
 \medskip
 
 {\bf If $G_{uv}$ is an edge between $u$ and $v$, and $H_{uv}$ traverses $uv$  $t \le M$ times from $u$ to $v$, and from $v$ to $u$ $0$ times:}
Then recall that $D_{uv}$ has a double arc $(a,a')$, where $a$ and $a'$ are both arcs from $v$ to $u$ of weight $0$. In addition, $D_{uv}$ has $t-1$ arcs from $v$ to $u$ of weight $-1$, $M-t$ arcs from $u$ to $v$ of weight $1$, and $M-1$ arcs from $v$ to $u$ of weight $1$.

(\ref{condH}) $\rightarrow$ (\ref{condD}): Let $t'$ be the number of arcs from $u$ to $v$ in $H'_{uv}$, and $s'$ the number of arcs from $v$ to $u$ in $H'_{uv}$.
Observe that $W = s'+t'-t$ and $B = t'-t-s'$.

If $s' = 0$, then $t' > 0$ and we have $W = t'-t = B$. If $W = 0$, then let $D'_{uv}$ have no arcs. If $W > 0$, then note that $W \le M-t$ and let $D'_{uv}$ have $W$ positive weight arcs from $u$ to $v$. If $W < 0$, then note that $|W|\le t-1$ (as otherwise $H'_{uv}$ does not cover $uv$) and let $D'_{uv}$ have $|W|$ negative weight arcs from $v$ to $u$. Observe that in each case $D'_{uv}$ satisfies condition \ref{condD}.

If $t' = 0$, then $s' > 0$ and we have $W = s'-t$ and $B =  -t-s'$.
Then let $D'_{uv}$ have all $t-1$ negative weight arcs from $v$ to $u$, both arcs in the double arc from $v$ to $u$, and $s'-1$ positive weight arcs from $v$ to $u$.
Observe that $D'_{uv}$ has weight $s'-1 - (t-1)=W$ and $b_{D'_{uv}}(u) = -t-s' = B$, and $D'_{uv}$ respects double arcs, and so $D'_{uv}$ satisfies condition \ref{condD}.

If $t'>0$ and $s'>0$, then if $t'+s'>2$ we may remove a pair of arcs $(uv,vu)$ from $H'$ and get a better solution to \MCPP{}. Therefore we may assume $t'=s'= 1$ and so $W=2-t$ and $B = -t$. 
If $t\ge 2$ let $D'_{uv}$ contain $t-2$ negative weight arcs from $v$ to $u$, and both arcs of the double arc from $v$ to $u$. 
Otherwise $t = 1$. In this case, let $D'_{uv}$ contain both arcs of the double arc from $v$ to $u$, and one positive weight arc from $u$ to $v$.
In either case, $D'_{uv}$ has weight $-(t-2)=W$ and $b_{D'_{uv}}(u) = -t = B $, and $D'_{uv}$ respects double arcs, and so $D'_{uv}$ satisfies condition \ref{condD}.

(\ref{condD}) $\rightarrow$ (\ref{condH}): Let $s'$ be the number of positive weight arcs from $u$ to $v$ in $D'_{uv}$, $t'$ the number of negative weight arcs from $v$ to $u$ in $D'_{uv}$, and $r'$ the number of positive weight arcs from $v$ to $u$ in $D'_{uv}$. Then $W = s'-t'+r'$.

Suppose first that $D'_{uv}$ does not contain the double arc. Then $B = s'-t'-r'$.
If $s'-t' = 0$, then let $H'_{uv}$ be $H_{uv}$ with $r'$ arcs from $v$ to $u$ added.
If $s'-t' > 0$, then let $H'_{uv}$ be $H_{uv}$ with $s'-t'$ arcs from $u$ to $v$ added and $r'$ arcs from $v$ to $u$ added.
If $s' - t' <0$, then let $H'_{uv}$ be $H_{uv}$ with $t'-s'$ arcs from $u$ to $v$ removed and $r'$ arcs from $v$ to $u$ added (note that as $t' < t$, removing $t'-s'$ arcs from $u$ to $v$ still leaves $uv$ covered).
Observe that in each case, $H'_{uv}$ satisfies condition \ref{condH}.

Now suppose that  $D'_{uv}$ contains the double arc. Then $B = s'-t'-r'-2$.
If $s'-t' = 0$, then let $H'_{uv}$ be $H_{uv}$ with one arc from $u$ to $v$ removed and $r'+1$ arcs from $v$ to $u$ added. 
If $s'-t' > 0$, then let $H'_{uv}$ be $H_{uv}$ with $s'-t'-1$ arcs from $u$ to $v$ added and $r'+1$ arcs from $v$ to $u$ added.
It $s' - t' <0$, then let $H'_{uv}$ be $H_{uv}$ with $t'-s'+1$ arcs from $u$ to $v$ removed and $r'+1$ arcs from $v$ to $u$ added.
Observe that in each case, $H'_{uv}$ satisfies condition \ref{condH}.

\medskip

 {\bf If $G_{uv}$ is an edge between $u$ and $v$ and $H_{uv}$ traverses $uv$ $t>0$ times from $u$ to $v$, and $s>0$ times from $v$ to $u$:}
 Then recall that we may assume $s=t=1$, 
and $D_{uv}$ has a forbidden pair $(a,a')$, where $a$ is an arc from $u$ to $v$, $a'$ is an arc from $v$ to $u$, and both $a$ and $a'$ have weight $-1$. In addition, $D_{uv}$ has $M-1$ arcs from $u$ to $v$ of weight $1$, and $M-1$ arcs from $v$ to $u$ of weight $1$.

(\ref{condH}) $\rightarrow$ (\ref{condD}): If $H'_{uv}=H_{uv}$, then let $D'_{uv}$ have no arcs.
Otherwise, we may assume all arcs in $H'_{uv}$ are in the same direction (as otherwise we may remove a pair of arcs $\{uv,vu\}$ to get a better solution). So assume that all arcs in $H'_{uv}$ are from $u$ to $v$ (the other case is symmetric). 

Let $t'>0$ be the number of arcs from $u$ to $v$ in $H'_{uv}$. Then $W = t'-2$ and $B = t'$.
Then let $D'_{uv}$ contain the arc $uv$ from the forbidden pair, together with $t'-1$ positive weight arcs from $u$ to $v$.
Observe that $D'_{uv}$ satisfies condition \ref{condD}.

(\ref{condD}) $\rightarrow$ (\ref{condH}): 
Let $t'$ be the number of positive weight arcs from $u$ to $v$ in $D'_{uv}$, and let $s'$ be the number of positive weight arcs from $v$ to $u$ in $D'_{uv}$.

Suppose first that $D'_{uv}$ doesn't contain either arc from the forbidden pair. Then $W = t'+s'$ and $B = t'-s'$.
Then let $H'_{uv}$ be $H_{uv}$ with $t'$ arcs added from $u$ to $v$ and $s'$ arcs added from $v$ to $u$.
Observe that $H'_{uv}$ satisfies condition \ref{condH}.

So now suppose that $D'$ contains one arc from the forbidden pair; assume $D'$ contains the arc $uv$ (the other case is symmetric).
Then $W = t'+s'-1$ and $B = t'-s'+1$.
Then let $H'_{uv}$ be $H_{uv}$ with $t'$ arcs added from $u$ to $v$, $s'$ arcs added from $v$ to $u$ and one arc from $v$ to $u$ removed (note that even if $t'=s'=0$, removing the arc from $v$ to $u$ still leaves an arc from $u$ to $v$ covering $uv$).
Observe that $H'_{uv}$ satisfies condition \ref{condH}.
\qed
\end{proof}
 
 {\bf \noindent Lemma \ref{lem:MCPPtoPBSCorrectness}.}
{\em Let $D$ be the directed multigraph derived from $G$ and $H$ by taking the vertex set $V(G)$ and adding the gadget $D_{uv}$ for every arc and edge $uv$ in $G$.
 Then there exists a solution $H'$ with weight less than $H$ if and only if $D$ has a properly balanced subgraph of weight less than $0$.}
 
 \begin{proof}
 Suppose first that $H'$ is a solution with weight less than $w(H)$, and let $H'_{uv}$ be the subgraph of $H'$ induced by $\{u,v\}$, for every edge or arc $uv$ in $G$.
 For each edge and arc $uv$, let $D'_{uv}$ be the subgraph of $D_{uv}$ corresponding to $H'_{uv}$ in the claim (which exists as $w(H'_{uv}) \ge 1$ for each $uv$, which in turn implies $w(H'_{uv}) < M$ for each $uv$).
 Let $D'$ be the union of all such $D'_{uv}$. 
 As each $D'_{uv}$ respects double arcs and forbidden pairs, so does $D'$. 
 By Lemma \ref{lem:changeGadgets}, the weight of $D'$ is $\sum_{uv}w(D'_{uv}) = \sum_{uv}(w(H'_{uv}) - w(H_{uv})) = w(H') - w(H) < 0$ (where the sums are taken over all edges and arcs $uv$).
 Finally, for each vertex $u$, the imbalance of $D'$ at $u$ is $\sum_{a \in A(u)}b_{D'_a(u)} = \sum_{a \in A(u)}(b_{H'_a}(u) - b_{H_a}(u)) = \sum_{a \in A(u)}b_{H'_a}(u) - \sum_{a \in A(u)}b_{H_a}(u) = 0-0 = 0$ (where $A(u)$ is the set of all edges or arcs containing $u$). Thus, $D'$ is balanced. It follows that $D'$ is a properly balanced subgraph of $D$ of negative weight, as required.
 
 Conversely, suppose that $D'$ is a properly balanced subgraph of $D$ with weight less than $0$,  and let $D'_{uv}$ be the subgraph of $D'$ induced by $\{u,v\}$, for every edge or arc $uv$ in $G$.
  For each edge and arc $uv$, let $H'_{uv}$ be the graph corresponding to $D'_{uv}$ in the claim.
  (To see that this exists, observe that for arc or edge $a$ such that $w(D'_{a})<0$ it must be the case that $w(H_{a}) + w(D'_{a}) \le M$, and so $H'_{a}$ exists, which implies that $w(H_{a}) + w(D'_{a}) > 0$. Thus $w(H_{a}) + w(D'_{a}>0$ for all $a$, and as $\sum_a(w(H_{a}) + w(D'_{a})) = w(H) + w(D') < M$, this implies that $w(H_{a}) + w(D'_{a})< M$ for all $a$.)  
  Let $H'$ be the union of all such $H'_{uv}$.
  As each $H'_{uv}$ covers $uv$, $H'$ covers all edges and arcs of $G$.
   By Lemma \ref{lem:changeGadgets}, the weight of $H'$ is $\sum_{uv}w(H'_{uv}) = \sum_{uv}(w(H_{uv}) + w(D'_{uv})) = w(H) + w(D') < w(H)$ (where the sums are taken over all edges and arcs $uv$).
   Finally, for each vertex $u$, the imbalance of $H'$ at $u$ is $\sum_{a \in A(u)}b_{H'_a(u)} = \sum_{a \in A(u)}(b_{H_a}(u) + b_{D'_a}(u)) = \sum_{a \in A(u)}b_{H_a}(u) + \sum_{a \in A(u)}b_{D'_a}(u) = 0+0 = 0$ (where $A(u)$ is the set of all edges or arcs containing $u$). Thus, $H'$ is balanced. It follows that $H'$ is a solution with weight less than $w(H)$, as required.
  \qed
\end{proof}

\end{document}